%% file: main.tex
\documentclass{article}
\usepackage{graphicx} 
\usepackage{xcolor}

\newcommand{\corr}[1]{\textcolor{black}{#1}}

\title{\corr{A} synthetic Gannon--Lee incompleteness theorem\footnote{This is a condensed version of CR's Master semester project, written at EPFL in Spring 2025 under MB's supervision.}}
\author{Mathias Braun\thanks{Institute of Mathematics, EPFL, 1015 Lausanne, Switzerland. mathias.braun@epfl.ch}, Carlo Rotolo\thanks{\corr{Dipartimento di Matematica, Università di Pisa, 56127, Pisa, Italy. c.rotolo1@studenti.unipi.it}}}
\date{}

\input{preamble}

\begin{document}

\maketitle

\begin{abstract}
    We prove the Gannon--Lee incompleteness theorem for globally hyperbolic spacetimes. We assume the synthetic null energy condition of Ketterer and a trappedness condition we call   ``synthetically asymptotically regular''. Our result generalizes this classical result to the weighted case. It also motivates and  indicates extensions to low regularity, which are deferred to future work.  

    \textit{2020 MSC Classification.} Primary 53C50, 83C75; Secondary 28A33, 49Q22, 53C23.
\end{abstract}

\tableofcontents


\section{Introduction}

\renewcommand\footnotemark{}
\renewcommand\footnoterule{}
In this work we state and prove a version in a synthetic setting of a classical result in general relativity, the Gannon--Lee \corr{incompleteness} theorem. In \cite{ket}, Ketterer introduced synthetic equivalents to the notions of null energy condition and trapped surface. Using these synthetic characterizations we are able to modify the hypotheses of the classical Gannon--Lee theorem and recover the original conclusion under the new hypotheses. \corr{Our contribution lies in giving direct proofs that do not rely on the respective equivalences from \cite{ket}.} \corr{A new technical ingredient we establish is} a local to global result for Ketterer's synthetic NEC, which is of independent interest and is used in the last step of the proof.
We work with a weighted spacetimes $(M,g,e^{-V}\mathrm{Vol}_M),$ for which Ketterer's conditions are still defined and equivalent to opportune modifications of the curvature conditions of the classical Gannon--Lee theorem. Therefore, we also obtain an extension of the classical Gannon--Lee theorem for weighted spacetimes. Here we describe some background on \corr{incompleteness} theorems and synthetic curvature bounds and outline the content and main results of this work. 
\subsection{The \corr{incompleteness} theorems}
The \corr{incompleteness} theorems of general relativity are a collection of results that relate the geometry or topology of spacetime with the presence of ``singularities", in the sense of incomplete causal geodesics. This \corr{notion} is due to Penrose, who in \cite{penr} proved the first of these theorems. In his work he showed that in a spacetime that satisfies the \emph{null energy condition} (NEC), i.e. positivity of the Ricci curvature along null directions, and that contains a non-compact Cauchy hypersurface, the presence of a compact \emph{trapped surface} \emph{implies} the existence of an incomplete null geodesic. The concept of trapped surface was also introduced by Penrose and intuitively describes the boundary of a region of space of such strong gravity that not even light can escape it \corr{(which is made rigorous by the conclusion of his incompleteness theorem)}. The importance of Penrose's work is testified by the fact that he was awarded the Nobel Prize in Physics for it in 2020.
After him, many others proved different \corr{incompleteness} theorems: in \cite{hawk} Hawking applied a similar reasoning to the timelike setting to obtain existence of a past incomplete timelike geodesic and in \cite{hawk_penr} Hawking and Penrose were able to collect and strengthen many developments on \corr{incompleteness}  theorems in a more \corr{universal} result, known as the Hawking--Penrose \corr{incompleteness} theorem. Nowadays these results constitute both a central part of the classical theory of mathematical general relativity and an active area of research. For a recent review of this matter we cite the surveys of Senovilla and Garfinkle \cite{sen_garf} and Steinbauer \cite{stein}. 
\vskip 5pt
In this project we are interested in the Gannon--Lee theorem. \corr{In its classical form,} null geodesic incompleteness is proved assuming the null energy condition and some topological constraints on a codimension $2$ submanifold $S$ contained in a hypersurface $\Sigma.$ In its first version the Gannon--Lee theorem was proved independently by Gannon in \cite{gan_1} \cite{gan_2} and Lee in \cite{lee}. However their proofs contained a mistake that was found by Galloway, who also proved a corrected version of the theorem in \cite{gall}, strengthening the hypotheses. A modern version of the Gannon--Lee theorem, valid in higher dimension and with weaker \corr{hypotheses} on the causal structure of spacetime and the topology of $S$ was then proved by Costa e Silva \cite{ces}, but once again the proof relied on a false assumption on the causality of covering spacetimes. A correct proof was then provided by Costa e Silva and Minguzzi \cite{ces_ming} and Schinnerl and Steinbauer in \cite{schinnerl_steinbauer}, requiring \emph{past reflectivity} (see Definition \ref{past_refl}) of spacetime and some of its coverings. The main assumptions of this final version of the Gannon--Lee theorem are the null energy condition and the existence of an \emph{asymptotically regular hypersurface}. For this last hypotheses we require for a spacelike partial Cauchy surface $\Sigma$ to contain a submanifold $S$, of codimension $2$ in $M$, that separates it in two connected components $\Sigma_-$ and $\Sigma_+$. Moreover, we assume a kind of trappedness for $S$. However, in contrast with Penrose's theorem, we require that $S$ is trapped only in the direction of null geodesics pointing towards $\Sigma_-$, while in the direction of $\Sigma_+$ we give a topological constraint and require that the map $\pi_1(S)\rightarrow\pi_1(\Sigma_+\cup S)$, induced by the inclusion, is surjective.

\subsection{Synthetic curvature bounds}
Both the null energy condition and the existence of an asymptotically regular hypersurface are assumptions that concern to some extent the curvature of the spacetime $(M,g, e^{-V}\mathrm{Vol}_M)$ and therefore require the metric $g$ to be of regularity at least $C^2$. However, the interest in developing the theory of general relativity and Lorentzian geometry in lower regularity has grown very much recently, especially in relation to the \corr{incompleteness} theorems. \corr{For instance, we refer to the review of Cavalletti and Mondino \cite{review_cav_mond}.}
One possible approach for recovering these results in spacetimes with a low regularity metric is to regard the Riemann curvature tensor as a distribution and formulate conditions such as the NEC in the distributional sense. In \cite{schinnerl_steinbauer}, Schinnerl and Steinbauer used this method to extend the validity of the Gannon--Lee theorem to metrics of regularity $C^1$. One of the earliest reference for the distributional approach to curvature is \cite{distr} by Geroch and Traschen. For a modern explanation, together with some background on the relevance of the low regularity setting, we once again refer to the survey \cite{stein} by Steinbauer.
\vskip 5pt
Another possibility is to avoid any explicit mention of curvature, replacing conditions such as the NEC with other assumptions that can be formulated in spacetimes of low regularity or even in the more general setting of Lorentzian length spaces as developed by Kunzinger and Sämann in \cite{kunz_sam}. This is motivated by the great success of this approach in the Riemannian case:  in the initial works of McCann \cite{mccann_riem}, Otto--Villani \cite{ott_vill}, Cordero-Erausquin--McCann--Schmuckenschläger \cite{cord_mcc_schm} and von Renesse--Sturm \cite{vonren_sturm},
techniques from optimal transport were applied to derive characterization for Ricci curvature bounds on Riemannian manifolds via convexity of appropriate functionals along geodesics on the space of probability measures on $M$. This lead to a strong theory of Ricci curvature bounds for non-smooth metric space by Sturm \cite{sturm_1} \cite{sturm_2} and Lott and Villani \cite{lott_vill}. In the Lorentzian setting this is a new but rapidly growing area of research: in \cite{mccann_lor} McCann gave a characterization of the strong energy condition for weighted spacetimes, i.e., positivity of the $N$-Bakry--Emery--Ricci curvature in time-like directions, via convexity properties of the Boltzmann--Shannon entropy functional
\[ E_V(\mu) = \int_M \rho \log (\rho)\; e^{-V} \mathrm{dVol}_M\]
on the space of absolutely continuous probability measures $\mu$ with density $\rho$ on $M$ w.r.t. the weighted volume $e^{-V}$Vol$_M$. Following a similar approach Mondino and Suhr were able to derive a synthetic characterization of the Einstein equations of general relativity in \cite{mond_suhr}. This line of work was expanded by Cavalletti and Mondino in \cite{cav_mond}, where they were also able to extend the notions of synthetic curvature bounds to the non-smooth setting of Lorentzian length spaces. Other recent developments were obtained by Braun \cite{braun} and Braun--Ohta \cite{braun_ohta}, who also analyzed the relation between (Bakry--Emery) Ricci curvature and another entropy functional, the \emph{N-Renyi entropy}
\[S_N(\mu) =-\int_M \rho^{1-\frac{1}{N}}e^{-V}\mathrm{dVol}_M.\] 

The analogous development in the case of curvature bounds along null directions is even more recent. In this work we focus on the approach of Ketterer \cite{ket}, who gave the first characterization of the NEC via optimal transport along null geodesics. Despite being synthetic in nature, Ketterer's synthetic NEC requires $C^2$ regularity of the metric to be defined and it remains to be understood wheter it is possible to extend it to a non-smooth setting. \corr{Right after \cite{ket}, another proposal for a synthetic null energy condition was given by McCann in \cite{mccann_null} as a limit of synthetic timelike curvature bounds.} More recently, in \cite{cav_mond_man}, Cavalletti, Manini and Mondino were able to formulate yet another characterization of the NEC, working in the space of absolutely continuous probability measures on \corr{null hypersurfaces} w.r.t. fixed \corr{reference measures}, called the \corr{\emph{rigged measure}}, and in \cite{cav_mond_man_due} they extended their method to Lorentzian length space. \corr{Just like a first synthetic version of the Penrose incompleteness theorem was established  \cite{cav_mond_man_due}, we believe their framework also allows for a synthetic version of the Gannon--Lee incompleteness theorem; we will defer this to future work.}

\subsection{Content of the work and main results}
In \cite{ket} Ketterer introduced his synthetic counterparts to the null energy condition and to the notion of trapped surface. As an application, he was able to prove a Penrose \corr{incompleteness} theorem \corr{by synthetic means, without relying on classical curvature quantities}. Motivated by this, we apply the same approach to the Gannon--Lee theorem. 

Ketterer's synthetic null energy condition concerns the null geometry of spacetime. A \emph{null hypersurface} $\mathcal{H}$ in a weighted spacetime $(M,g,e^{-V}\mathrm{Vol}_M)$ is a $C^2$ hypersurface such that the restriction of the metric to $\mathcal{H}$ is degenerate. The metric induces a degenerate volume Vol$_\mathcal{H}$ on $\mathcal{H}$ and we define $m_\mathcal{H} = e^{-V}\mathrm{Vol}_\mathcal{H}.$ For an $m_\mathcal{H}$-absolutely continuous probability measure $\mu$ with density $\rho$ we consider the relative $N$-Renyi entropy
\[S_N(\mu\,\vert\,m_\mathcal{H}) = -\int_M\rho^{1-\frac{1}{N}} \mathrm{d}m_\mathcal{H}.\]

There exists a $C^1$ null normal vector field $K$ along $\mathcal{H}$, whose flow curves can be repara\-metrized to be null geodesics, which we call \emph{null geodesic generators} of $\mathcal{H}$. These geodesics foliate $\mathcal{H}$ and along them we can transport probability measures
supported on spacelike cross-sections of $\mathcal{H}$. Concretely, given two null connected (see Section \ref{snec} for a precise definition) probability measures $\mu_0$ and $\mu_1$, supported on spacelike cross-sections $S_0$ and $S_1$, \corr{respectively},  Ketterer showed that it is possible to transport $\mu_0$ to $\mu_1$ via a \emph{null displacement interpolation} $(\mu_t)_{t\in[0,1]}$ of probability measures. A weighted spacetime then satisfies the  \emph{synthetic} $N$-\emph{null energy condition} if for any two acausal probability measures that are null connected and are $m_\mathcal{H}$-absolutely continuous, the $N$-Renyi entropy is convex along the null displacement interpolation, i.e. \corr{for every $t\in [0,1]$,}
\[S_N(\mu_t\,\vert\,m_\mathcal{H}) \le (1-t)S_N(\mu_0\,\vert\,m_\mathcal{H}) + t S_N(\mu_1\,\vert\, m_\mathcal{H}).\]
In \cite{ket} Ketterer \corr{also gave} a synthetic characterization of trappedness of a codimension $2$ spacelike surface $S$ (see Definition \ref{synth_trapp}). He proved that $S$ is trapped if and only if the measures of geodesic balls in $S$ decreases quickly enough when transported along null geodesics exiting orthogonally to $S$.
\vskip 5pt
We can now state our synthetic Gannon--Lee theorem. The main notions required are the following:
\begin{itemize}
    \item past reflectivity of spacetime and some of its covering, as in the classical Gannon--Lee theorem (see Definition \ref{past_refl});
    \item the synthetic null energy condition (see above and Definition \ref{synth_nec});
    \item existence of a synthetically asymptotically regular hypersurface $\Sigma$, which we define by replacing the trappedness condition of Definition \ref{asymp_reg} with the same bounds required by Ketterer on the decrease of the measures of geodesic balls, while transported only along null geodesics pointing towards $\Sigma_-$ (see Definition \ref{synth_asymp_reg}). We also request that $\Sigma$ admits a piercing (see Section \ref{initial}).
\end{itemize}

\begin{theorem}[Synthetic Gannon--Lee theorem]
    Let $(M,g, e^{-V}\mathrm{Vol}_M)$ be a past reflecting, null geodesically complete weighted spacetime of dimension $n+1$ which satisfies the synthetic $N$-null energy condition for $N\ge n-1$. Let $\Sigma$ be a synthetically asymptotically regular hypersurface admitting a piercing, with enclosing surface $S.$ Suppose that also one of the two following possibilities holds:
    \begin{itemize}
        \item any covering spacetime of $(M,\,g,\, e^{-V}\mathrm{Vol}_M)$ is past reflecting, or
        \item S is simply connected and the universal covering spacetime of $(M,g, e^{-V}\mathrm{Vol}_M)$ is past-reflecting.
    \end{itemize}
    Then the map $i_\#:\pi_1(S)\rightarrow \pi_1(\Sigma)$, induced by the inclusion $ i:S\rightarrow \Sigma,$ is surjective.
\end{theorem}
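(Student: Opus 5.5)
We argue by contradiction, following the classical scheme of Costa e Silva--Minguzzi and Schinnerl--Steinbauer, and replace each use of curvature by a synthetic statement. Assume $i_\#$ is not surjective. Take the covering $\tilde\Sigma\to\Sigma$ associated with the subgroup $i_\#(\pi_1(S))\subseteq \pi_1(\Sigma)$. Because $\Sigma$ admits a piercing, $M$ is diffeomorphic to $\mathbb{R}\times\Sigma$ in a neighbourhood of $\Sigma$ compatible with the flow. The covering therefore extends to a covering spacetime $\tilde M\to M$. We pull back $g$ and $V$, so that $(\tilde M,\tilde g, e^{-\tilde V}\mathrm{Vol}_{\tilde M})$ is again a weighted spacetime. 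In the second alternative of the hypotheses, $S$ is simply connected and we use the universal cover instead. In both cases $\tilde M$ is past reflecting and null geodesically complete. It also satisfies the synthetic $N$-null energy condition, since both the condition and the null displacement interpolations are local and lift isometrically.

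Next we lift the enclosing surface. Since $S$ lifts homeomorphically to a compact copy $\tilde S\subset\tilde\Sigma$, non-surjectivity of $i_\#$ forces the lift $\tilde\Sigma_+\cup\tilde S$ attached to $\tilde S$ to be non-compact. More precisely, there are infinitely many sheets, or one sheet is non-compact, and $\tilde\Sigma_-$ is separated from $\tilde\Sigma_+$ by $\tilde S$. Consider $\partial J^+(\tilde\Sigma_-\cup\tilde S)$, or more precisely the achronal boundary $\partial I^+(\overline{\tilde\Sigma_+})$ as in the classical proof. Using past reflectivity of $\tilde M$ and the piercing, we show that this boundary is ruled by null geodesics which emanate orthogonally from $\tilde S$ towards $\tilde \Sigma_-$ and have no past endpoint elsewhere. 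If each such generator stays on the boundary only up to a uniform parameter $T$, the boundary is compact. The integral curves of the piercing then project it injectively and continuously onto an open and closed subset of $\tilde\Sigma$. That subset would be a compact component, contradicting non-compactness. Hence some null generator $\gamma$ from $\tilde S$ towards $\tilde\Sigma_-$ is defined on $[0,\infty)$, by completeness, and stays on the achronal boundary, so it contains no focal points.

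The synthetic input is then a synthetic focusing statement. Synthetic asymptotic regularity lifts to $\tilde S$ and says that the $m$-measure of small geodesic balls $B_r(x)\subset\tilde S$ transported along the inward null geodesics decreases at Ketterer's rate. Combined with the synthetic $N$-NEC with $N\ge n-1$, convexity of $S_N$ along the null displacement interpolation between the normalized measure on $B_r(x)$ and its transport at time $t$ should give a bound of the form $m(\Phi_t(B_r(x)))^{1/N}\le (1-t/T_0)_+\,m(B_r(x))^{1/N}$. This forces the transported measure to collapse by a finite time $T_0$ uniform over compact $\tilde S$. Since $\gamma$ lies on an achronal null hypersurface near $\tilde S$, this collapse contradicts the absence of focal points. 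A smooth portion of this null hypersurface carries a non-degenerate transported measure, so its image cannot have zero measure before leaving the boundary.

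The main obstacle is that Ketterer's synthetic NEC is formulated as convexity along interpolations between measures on spacelike cross-sections of a single null hypersurface. The synthetic trappedness, however, only gives information infinitesimally near $S$. We therefore need the local-to-global result announced in the introduction: local convexity of $S_N$ along short null interpolations implies convexity along long ones. Our plan for this step is the standard Sturm-type argument. Subdivide $[0,1]$, apply convexity on each small piece of the null hypersurface foliated by the generators, and use that the density along each generator satisfies a one-dimensional concavity property, namely $\rho^{-1/N}$ concave in $t$. That property is local, and concavity on overlapping intervals glues to global concavity. A second delicate point is making sure the region of the achronal boundary near $\gamma$ is $C^2$ enough to host the interpolation. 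For this we restrict to generators before their first focal or cut point, which is exactly the regime needed for the contradiction.
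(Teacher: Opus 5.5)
Your overall architecture (synthetic focusing, compactness of the inner region via the achronal boundary and past reflectivity, then a covering argument) matches the paper's. However, the step that turns non-surjectivity of $i_\#$ into a contradiction is set up on the wrong side of $\tilde S$, and it does not work as written. First, the set $\tilde\Sigma_+\cup\tilde S$ is a copy of $\overline{\Sigma}_+$, so it is non-compact for trivial reasons, and nothing can be contradicted there. Second, the piercing does not map the compact null part $T=\partial I^+(\tilde\Sigma_+)\setminus\tilde\Sigma_+$ onto an open and closed subset of $\tilde\Sigma$. Its image is exactly $\overline{\tilde\Sigma}_-=\tilde\Sigma_-\cup\tilde S$, which is closed but not open along $\tilde S$. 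The paper proves $\rho_X(T)=\overline{\Sigma}_-$ using $\mathrm{edge}(T)\subseteq S$, invariance of domain and openness of $\rho_X$. If your claim were true, a compact open-closed subset of the connected, non-compact $\tilde\Sigma$ would give a contradiction even when $i_\#$ is surjective. A round sphere in $\{t=0\}\subset\mathbb{R}^{1,n}$ with $\Sigma_-$ the inner ball satisfies all hypotheses and is such a case, so your argument proves too much.

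The correct mechanism has two parts. First, $\overline{\Sigma}_-$ is compact for every enclosing surface in every admissible spacetime. Second, suppose the covering associated with $i_\#(\pi_1(S))$ has at least two sheets. Surjectivity of $h_\#:\pi_1(S)\to\pi_1(\overline{\Sigma}_+)$ makes a collar-thickening $W$ of $\overline{\Sigma}_+$ lift diffeomorphically. Hence two lifts $\tilde S_1,\tilde S_2$ bound disjoint copies $\tilde C_1,\tilde C_2$ of $\overline{\Sigma}_+$. The inner region $\tilde\Sigma\setminus\tilde C_1$ of $\tilde S_1$ must be compact, yet it contains the closed non-compact set $\tilde C_2$. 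Your proposal never uses surjectivity of $h_\#$, which is the only place the topological part of asymptotic regularity enters, so it cannot reach the conclusion.

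Two further points. You assert that $\partial I^+(\tilde\Sigma_+)\setminus\tilde\Sigma_+$ is ruled by $K_-$-generators issuing from $\tilde S$. This is the heart of the causality step, and you give no argument for it. It is not obtained from past reflectivity and the piercing alone. In the paper one approximates $q\in T$ by points $q_k\in I^+(S)$ and takes a limit $\sigma$ of timelike curves from $S$. One must then exclude that $\sigma$ is an inextendible null generator never entering $I^+(S)$, and this is done with the focusing result. Only afterwards does past reflectivity yield $q\in I^+(S)$. In your contrapositive set-up this case can be absorbed, since it produces precisely an infinite focal-free $K_-$-generator, but it has to be argued.

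Second, the local-to-global theorem is misplaced. Focusing does not need it, because the synthetic NEC is already global. Lemma~\ref{concavity} applied to $\mu_t=(T_t)_\#\mu_0$ up to the first focal time gives $y'\ge y^2/N$ with $y(0)\ge\varepsilon(x)>0$, hence collapse before $N/\min_S\varepsilon$. Where it is genuinely needed is your unsupported claim that the synthetic NEC passes to $\tilde M$ ``because it is local''. That condition quantifies over all null hypersurfaces of $\tilde M$ and all acausal measures on them, and their projections to $M$ need not be admissible (embedded, acausal). Theorem~\ref{local_to_global} together with Corollary~\ref{covering} is exactly what supplies this.
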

\noindent We follow the structure of the proof presented by Costa e Silva and Minguzzi in \cite{ces_ming} and Schinnerl and Steinbauer in \cite{schinnerl_steinbauer}. The key result is compactness of the set $\overline{\Sigma}_-$, from which the thesis follows using a purely topological argument. However, the classical proof of this heavily relies on a focusing result for null geodesics exiting from $S$ in the direction of $\Sigma_-$ which requires both the NEC and inner trappedness of $S.$ We apply results from Ketterer \cite{ket} to obtain a similar focusing result in our synthetic setting and then we are able to adapt easily the other steps of the proof. Replacing past reflectivity with the stronger assumption of global hyperbolicity, we obtain the following easier statement of the theorem.
\begin{theorem}[Globally hyperbolic Gannon--Lee theorem]
     Let $(M,g, e^{-V}\mathrm{Vol}_M)$ be a globally hyperbolic, null geodesically complete weighted spacetime of dimension $n+1$ which satisfies the synthetic $N$-null energy condition for $N> n-1$. Let $\Sigma$ be a synthetically asymptotically regular Cauchy surface, with enclosing surface $S.$
     Then the map $i_\#:\pi_1(S)\rightarrow \pi_1(\Sigma)$, induced by the inclusion $ i:S\rightarrow \Sigma,$ is surjective.
\end{theorem}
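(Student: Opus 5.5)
The plan is to deduce the globally hyperbolic statement from the synthetic Gannon--Lee theorem above, checking each hypothesis in turn. Alternatively, I would run its proof directly; the structure is the same, but several steps simplify.

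\emph{Step 1 (causality hypotheses).} A globally hyperbolic spacetime is causally continuous, hence past reflecting. Every covering spacetime of a globally hyperbolic spacetime is again globally hyperbolic. To see this, lift a Cauchy surface $\Sigma$ to its full preimage $\tilde\Sigma$ in the covering. Any inextendible causal curve in the covering projects to an inextendible causal curve in $M$, which meets $\Sigma$ exactly once. Since $\Sigma$ is achronal and the projection is a local isometry, the lifted curve meets $\tilde\Sigma$ exactly once. Acausality of $\tilde\Sigma$ follows similarly, since a causal curve between two points of $\tilde\Sigma$ projects to a causal curve between two points of $\Sigma$. Hence every covering is past reflecting, and the first alternative of the synthetic theorem holds.

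\emph{Step 2 (piercing).} A Cauchy surface admits a piercing. Choose a complete timelike vector field; by Geroch's splitting, $M\cong\mathbb R\times\Sigma$ with timelike $\partial_t$. Each integral curve is inextendible and timelike, so it meets $\Sigma$ exactly once, which is precisely the definition of a piercing in Section~\ref{initial}. I would only need to check that the notion there asks for nothing more, such as extra completeness of the flow. If it does, I would rescale the vector field by a positive function so that its flow becomes complete.

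\emph{Step 3 (curvature and trappedness).} The hypothesis $N>n-1$ in particular gives $N\ge n-1$, so the synthetic $N$-NEC hypothesis transfers directly. A synthetically asymptotically regular Cauchy surface is in particular a synthetically asymptotically regular hypersurface in the sense of Definition~\ref{synth_asymp_reg}, with the same enclosing surface $S$. Applying the synthetic Gannon--Lee theorem then yields surjectivity of $i_\#$.

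\emph{Main obstacle.} The delicate point is Step~1, namely the lifting argument that coverings of globally hyperbolic spacetimes stay globally hyperbolic. This is exactly the kind of claim that went wrong in earlier literature, so I would verify it carefully via Cauchy surfaces rather than via compactness of causal diamonds. A secondary issue is that the strict inequality $N>n-1$ may be intended to let the focusing estimate of Ketterer run more easily in a direct proof. If the reduction hit a technical snag, I would instead prove compactness of $\overline{\Sigma}_-$ directly. The argument would take the synthetic focusing result giving $\partial J^+(\overline{\Sigma}_-)\cap J^-$-type compactness of the boundary generated by null geodesics from $S$ toward $\Sigma_-$. I would then project along the timelike flow onto $\Sigma$, and conclude with the topological covering argument.
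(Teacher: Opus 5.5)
Your proposal is correct and follows the paper's own route. The paper obtains this statement from Theorem~\ref{synth_gan} simply by noting that global hyperbolicity implies past reflectivity and passes to coverings, so the first alternative applies; your Steps 1--2 make explicit what the paper leaves implicit, namely that $\Phi^{-1}(\Sigma)$ is Cauchy in any covering and that a piercing exists (for the latter, any nowhere-vanishing timelike vector field suffices without Geroch splitting, since its maximal integral curves are inextendible timelike curves).
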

The final part of the proof of the Gannon--Lee theorem, both in the classical and in the synthetic setting, consists in a topological argument involving covering spacetimes of $M$. In particular, it is crucial that all the hypotheses on $M$ lift to coverings. This is obvious for the NEC, since the Ricci curvature is a local object. Ketterer's synthetic NEC is equivalent to the classical one, hence it also has a local to global property. However, in continuity with our purpose of proving a truly synthetic Gannon--Lee theorem we prove this result independently in Section \ref{loc_glob}. \corr{This yields a first null version of the local-to-global result for the timelike curvature-dimension condition of Cavalletti and Mondino \cite{cav_mond} established by Braun \cite{braun}.} Our proof is inspired by many similar results obtained in the Riemaniann case, such as those proved by Sturm \cite{sturm_1}, Villani \cite{vill}, Bacher--Sturm \cite{BACHER201028}, Cavalletti--Millman \cite{Cavalletti2016TheGT}, and Ketterer \cite{ket_local_to_global}, and has also the advantage that it could be possibly adapted to some generalizations of the synthetic NEC to the non-smooth setting, and perhaps also to the one proposed in \cite{cav_mond_man_due} by Cavalletti, Manini and Mondino (which we defer to future work).
\subsection{Organization of the work}
The rest of this work is structured as follows: in section $2$ we briefly lay out the preliminary notions from Lorentzian geometry and then we introduce Ketterer's synthetic null energy condition, referring to his paper \cite{ket} for a deeper description. In section $3$ we state the Gannon--Lee theorem, both in its classical and synthetic version. We give the definition of past reflectivity, which is the causality hypotheses of the theorem, then we introduce rigorously the notion of asymptotically regular hypersurface and its synthetic counterpart. We conclude by stating the classical and synthetic Gannon--Lee theorem. In Section $4$ we stir away from the Gannon--Lee theorem and we prove our local to global result for the synthetic NEC. In Section $5$ we turn to the proof of the synthetic Gannon--Lee theorem. We include the proof of the focusing result from \cite{ket}, which is the main point where our proof differs from the classical one. Then we adapt the reasoning of Costa e Silva--Minguzzi \cite{ces_ming} and Schinnerl--Steinbauer \cite{schinnerl_steinbauer} for the latter part of the proof; some modifications are needed for compactness of $\overline{\Sigma}_-$, while the final topological step remains valid in our setting. 
\vskip 10pt
\noindent \textbf{Acknowledgments} MB is supported by the EPFL through a Bernoulli Instructorship. Most of this work was carried out when CR was visiting the EPFL for an exchange semester with the financial support of the Swiss-European Mobility Programme. CR also wishes to thank the INdAM research group GNSAGA for its support.
\section{Preliminary notions} We begin by laying out all the preliminary notions that are needed for understanding the statement and the proof of our results. In the first part of this section we will cover the basics of Lorentzian geometry and then we will turn to describe Ketterer's synthetic NEC. 
\subsection{Lorentzian geometry}
 Some references on Lorentzian geometry are the classical texts of Hawking and Ellis \cite{hawk_ell}, O'Neill \cite{oneill}, or the more recent \cite{ming_2} by Minguzzi. In the rest of this paper we will denote with $M$ a Lorentzian manifold of dimension $n+1$, i.e., a smooth manifold without boundary endowed with a metric tensor $g$ of signature $(-, +, \dots, +)$. We will require $g$ to be of regularity $C^2$, where all the classical theory of Lorentzian geometry extends without any difficulties (see Steinbauer \cite{stein}). We denote with $\langle v,w\rangle$ the bilinear form $g(v,w).$ A non-zero tangent vector $v\in TM$ is said to be \emph{timelike}, \emph{spacelike} or \emph{null} if $\langle v,v\rangle$ is respectively negative positive or zero. The zero vector is by convention spacelike. Similarly, a $C^1$ curve $\gamma : I\rightarrow M$ is \emph{timelike}, \emph{spacelike} or \emph{null} if its tangent vectors if so is its velocity $\gamma'(t)$ for all $t\in I$. \\
 \indent We say that $M$ is \emph{time oriented}, or a \emph{spacetime} if it admits a continuous non-vanishing vector field $Y$. A vector $v\in T_pM$ is said to be \emph{future directed} if $\langle Y(p), v\rangle < 0$ and \emph{past directed} if $\langle Y(p), v\rangle > 0$. \\
 \indent We say that a point $q$ lies in the \emph{chronological future} of $p$, and we write $p \ll q$, if there is a future directed timelike curve from $p$ to $q$. Analogously we say that $q$ lies in the \emph{causal future} of $p$, and we write $p\le q$, if there is a future directed causal curve from $p$ to $q.$ We then define the \emph{chronological} and \emph{causal future} of $p$ as 
$$I^+(p) = \{ q \in M\,\vert\, p \ll q\}, \qquad J^+(p)= \{q \in M\,\vert\, p\le q\}.$$
For a subset $A\subseteq M$ we define its chronological/causal future as the union of the chronological/causal future of all of its points. We denote these sets with $I^+(A), J^+(A).$ By considering the opposite relations we define also the \emph{chronological} and \emph{causal past} of $A$, denoted with $I^-(A)$ and $J^-(A).$ Some properties of the chronological and causal relations are the following (see Minguzzi \cite{ming_2}, Proposition 1.16):
\begin{proposition}[Properties of $\ll$ and $\le$]
    The relations $\ll$ and $\le$ are transitive. Moreover, the relation $\ll$ is open; in particular $I^{\pm}(S)$ is open for any subset $S\subseteq M$.
\end{proposition}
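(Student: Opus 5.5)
The plan is to treat the two assertions separately: transitivity is essentially by concatenation, and openness reduces to a local statement in convex normal neighbourhoods.

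\textbf{Transitivity.} Suppose $p\ll q$ and $q\ll r$, witnessed by future directed timelike curves $\gamma_1$ from $p$ to $q$ and $\gamma_2$ from $q$ to $r$. The concatenation $\gamma_1 * \gamma_2$ is a piecewise $C^1$ future directed timelike curve from $p$ to $r$. The same argument works for $\le$, and for the mixed cases. This assumes, as is standard, that causal and timelike curves are allowed to be piecewise $C^1$, or more generally locally Lipschitz. If one insists on $C^1$ curves, I would first smooth the corner at $q$ inside a convex normal neighbourhood. This is possible because the future cones at $q$ are convex, so the two one-sided tangent vectors at $q$ can be joined within the cone.

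\textbf{Local openness.} The heart of the argument is a local statement, and I expect it to be the main obstacle. Let $U$ be a convex normal neighbourhood and $x,y\in U$. Write $\exp_x^{-1}(y)$ for the initial vector of the unique geodesic in $U$ from $x$ to $y$. The claim is that the set of pairs $(x,y)\in U\times U$ for which $\exp_x^{-1}(y)$ is future directed timelike is open in $U\times U$, and that for such pairs the radial geodesic from $x$ to $y$ witnesses $x\ll y$.
\begin{itemize}
\item Openness follows because the map $(x,y)\mapsto \exp_x^{-1}(y)$ is continuous on $U\times U$, and the set of future directed timelike vectors is open in $TM$. Here one uses continuity of $g$ and of the time orientation field $Y$.
\item For the witnessing claim, a geodesic has constant causal character and preserves its time orientation along its length. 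So a geodesic with future timelike initial velocity is a future directed timelike curve.
\end{itemize}
At $C^2$ regularity the exponential map is $C^1$ with convex neighbourhoods available, so the usual arguments go through.

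\textbf{Global openness.} Now take $p\ll q$ via a timelike curve $\gamma$. Choose points $p_1,q_1$ on $\gamma$ with $p\ll p_1\ll q_1\ll q$, where:
\begin{itemize}
\item $p_1$ lies in a convex neighbourhood $U$ of $p$, on the initial piece of $\gamma$, which is itself timelike;
\item $q_1$ lies in a convex neighbourhood $V$ of $q$.
\end{itemize}
I expect that timelike curves inside a convex neighbourhood can be replaced by the radial geodesic; this is the standard Gauss-lemma argument. It gives that $\exp_p^{-1}(p_1)$ is future timelike, and likewise that $\exp_{q_1}^{-1}(q)$ is future timelike. By the local openness above, there are neighbourhoods $O_p\ni p$ and $O_q\ni q$ such that $p'\ll p_1$ for all $p'\in O_p$ and $q_1\ll q'$ for all $q'\in O_q$. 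Transitivity then gives $p'\ll p_1\ll q_1\ll q'$, so $O_p\times O_q$ lies in the relation $\ll$. Hence $\ll$ is open in $M\times M$.

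\textbf{Openness of $I^{\pm}(S)$.} In particular each $I^+(p)$ is open, being a slice of an open relation. Then $I^+(S)=\bigcup_{p\in S}I^+(p)$ is open as a union of open sets, and the case of $I^-(S)$ follows by reversing the time orientation.
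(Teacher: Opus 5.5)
Your proof is correct and is the standard textbook argument: concatenation gives transitivity, and openness of $\ll$ follows from continuity of $(x,y)\mapsto\exp_x^{-1}(y)$ on a convex normal neighbourhood combined with transitivity; the paper gives no proof of its own and simply refers to Minguzzi. The one point to tighten is the curve convention: the paper nominally uses $C^1$ curves, so state explicitly that you work with piecewise $C^1$ (or locally Lipschitz) causal curves, as the cited source does, because your corner-smoothing fallback is easy for $\ll$ but for $\le$ with two non-proportional null pieces needs a quantitative argument (using $\langle u,w\rangle<0$ to beat the variation of the cones near $q$), not just convexity of the cone at $q$.
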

Transitivity can be strengthened with the so-called Push-up Lemma. For a proof see Theorem 2.24 of Minguzzi \cite{ming_2}.
\begin{proposition}[Push-up Lemma]
    Let $p,q,r$ be points in a spacetime $M$. The following statements hold:
    \begin{flalign*}
        &p\le q \ll r \Rightarrow p\ll r, \;and \\
        &p\ll q\le r \Rightarrow p\ll r.
    \end{flalign*}
\end{proposition}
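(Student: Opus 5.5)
The statement is the Push-up Lemma, and I plan to derive it from local causal structure and openness of $\ll$. By time-duality (reversing the time orientation exchanges the two implications) it suffices to prove $p\le q\ll r\Rightarrow p\ll r$.

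\emph{Trivial cases.} If $p=q$ there is nothing to prove. So assume there is a future directed causal curve $\gamma$ from $p$ to $q$ and a future directed timelike curve $\sigma$ from $q$ to $r$. If $\gamma$ is not a null pregeodesic, meaning it is not a reparametrized null geodesic without breaks, then the standard variational fact gives $p\ll q$. This fact says that a causal curve which is not a null pregeodesic can be deformed, with fixed endpoints, into a timelike curve (O'Neill, Prop.~10.46; Minguzzi). In that case $p\ll r$ follows by transitivity of $\ll$.

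\emph{Main case: $\gamma$ is a null geodesic.} Concatenate $\gamma$ with an initial segment of $\sigma$. The resulting causal curve from $p$ to a point $\sigma(\varepsilon)$ has a corner at $q$, where a null direction meets a timelike one. I want to show $p\ll\sigma(\varepsilon)$, and it suffices to work locally.

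Choose a convex normal neighborhood $U$ of $q$. In $U$, causal relations are described by the exponential map: a point $x\in U$ satisfies $y\ll x$ (within $U$) if and only if $\exp_y^{-1}(x)$ is future timelike. The corner argument goes as follows. Take a point $p'=\gamma(t_0)\in U$ slightly before $q$ on $\gamma$, and $r'=\sigma(\varepsilon)\in U$. The broken curve $p'\to q\to r'$ consists of a null segment followed by a timelike segment, and it is not an unbroken null geodesic. Hence, by the local version of the same fact (Gauss lemma / the reverse triangle inequality in a convex neighborhood), $p'\ll r'$. Concretely, the position vector field $P=\exp_{p'}^{-1}$ gives the quadratic function $Q=\langle P,P\rangle$. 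Along $\sigma$ starting at $q$, we have $Q(q)=0$ and $\frac{d}{ds}Q(\sigma(s))=2\langle P,\sigma'\rangle<0$, since $P(q)$ is future null and $\sigma'$ is future timelike. So $Q(\sigma(s))<0$ for small $s>0$, which puts $\sigma(s)$ in $I^+(p')$.

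\emph{Conclusion.} We then have $p\le p'\ll r'\le r$. The segment from $p$ to $p'$ is causal, and $p'\ll r'$. To conclude I use the corner trick at $p'$ again only if needed; it is cleaner to observe that $p'\ll r'$ and $r'\ll r$ along $\sigma$. It therefore remains to absorb $p\le p'$. For this, cover the compact image of $\gamma|_{[0,t_0]}$ by finitely many convex neighborhoods and run the same $Q$-argument backward inductively along $\gamma$ (a Lebesgue-number argument). At each stage, openness of $\ll$ (the proposition above) lets the timelike endpoint be perturbed so that the next corner remains null-then-timelike.

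The main obstacle is this local corner computation and its globalization along $\gamma$. I expect it to go through via the sign of $\frac{d}{ds}\langle P,P\rangle$. Alternatively, one could cite the variational fact once for the whole broken curve $\gamma*\sigma$, which is not a null pregeodesic because of the corner at $q$; this immediately yields $p\ll r$.
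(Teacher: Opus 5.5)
Your proof is correct and follows the standard textbook route; the paper gives no argument of its own and simply cites Minguzzi, Theorem~2.24. Push-up reduces to the fact that a causal curve which is not a null pregeodesic can be deformed into a timelike curve with the same endpoints, and your Gauss-lemma corner computation in a convex neighbourhood, chained backward along a finite subdivision of $\gamma$, correctly proves the special case you need. Your closing remark alone already settles the statement, since $\gamma*\sigma$ contains a timelike piece and so is not a null pregeodesic. In the hands-on version, $P$ should be the position vector field at $x$ (so that $\nabla Q=2P$ by the Gauss lemma) rather than $\exp_{p'}^{-1}(x)\in T_{p'}M$, and openness of $\ll$ is not actually needed in the induction.
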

Given an open subset $U$ containing $A$, we can also define the \emph{relative chronological/causal future/past of A in U}, by requesting the curves we consider to lie entirely in $U$. Concretely, we define 
$$I^+(A,U) = \{q\in U\,\vert\, \exists\;\mathrm{a\;future\;directed\; timelike\; curve\; from\; \emph{A}\; to \; \emph{q}\; contained \; in \; U}\}$$
and similarly for $I^-(A,U)$ and $J^{\pm}(A,U).$ \\
\indent We call a subset $A$ in a spacetime \emph{achronal} if it intersects at most once every timelike curve, and \emph{acausal} if it intersects at most once every causal curve. A \emph{partial Cauchy surface} is an achronal $C^2$ hypersurface in $M$. 
We call a $C^2$ hypersurface $\Sigma\in M$ a \emph{Cauchy surface} if every inextendible timelike curve intersects $\Sigma$ exactly once.
A spacetime admitting a Cauchy surface is called \emph{globally hyperbolic.} 

Given an achronal set $A$, its \emph{edge} is the set of points $p\in \overline{A}$ such that every open neighborhood $U$ of $p$ contains a timelike curve from $I^-(p, U)$ to $I^+(p,U)$ that does not intersect $A.$ A proof of the following can be found in O'Neill \cite{oneill}, Proposition 14.25.
\begin{proposition}[Achronal set with no edge is a topological hypersurface]
\label{achronal_edge}
An achronal subset $A\subseteq M$ is a topological hypersurface if and only if it is disjoint from its edge. 
\end{proposition}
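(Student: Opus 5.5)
The plan is to prove both implications locally near a point $p\in A$, using a chart adapted to the time orientation. The equivalence then reduces to two statements. For $p\notin\mathrm{edge}(A)$, $A$ is locally the graph of a continuous function over a spatial slice. Conversely, a topological hypersurface locally separates its neighbourhood, and achronality forces $I^-(p,U)$ and $I^+(p,U)$ onto opposite sides.

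\emph{($\Leftarrow$) No edge implies hypersurface.} Fix $p\in A$. Since $p\notin\mathrm{edge}(A)$, there is an open neighbourhood $U$ of $p$ such that every timelike curve in $U$ from $I^-(p,U)$ to $I^+(p,U)$ meets $A$. Shrinking $U$, I choose coordinates $(x^0,\dots,x^n)$ on a box $W=(-a,a)\times B\subseteq U$ centred at $p$ with the following properties: the curves $t\mapsto(t,x)$ are future directed timelike, and for $|x|$ small each such line starts in $I^-(p,U)$ and ends in $I^+(p,U)$. This is possible because $\partial_0$ is timelike at $p$ and $I^\pm(p,U)$ are open. Shrinking $B$ further, every vertical line meets $A$ by the no-edge property, and it meets $A$ at most once by achronality. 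This defines a function $h:B\to(-a,a)$ whose graph is $A\cap W$.

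It remains to prove that $h$ is continuous; this is the step I expect to be the main obstacle. The plan is to argue by contradiction. Suppose $x_k\to x$ but $h(x_k)\to c\neq h(x)$ along a subsequence, say $c>h(x)$; the other case is symmetric. Then for large $k$ the point $(h(x),x)\in A$ lies chronologically below $(h(x_k),x_k)\in A$. The reason is that $I^+((h(x),x),W)$ is open and contains $(c',x)$ for every $c'>h(x)$. This contradicts achronality.

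Using continuity, the map $x\mapsto(h(x),x)$ is a homeomorphism from $B$ onto $A\cap W$. Hence $A$ is a topological hypersurface near $p$.

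\emph{($\Rightarrow$) Hypersurface implies no edge.} Let $p\in A$ and choose a neighbourhood $U$ of $p$ with $A\cap U$ closed in $U$ and homeomorphic to a ball of dimension $n$. By shrinking and using invariance of domain (Jordan--Brouwer separation in a chart), $U\setminus A$ has exactly two components $U_1,U_2$, each having $p$ in its closure. Take a small convex neighbourhood, and consider a timelike curve $\alpha$ through $p$ transverse in the topological sense. Its past part lies in $I^-(p,U)$ and its future part in $I^+(p,U)$, and by achronality neither part meets $A$ except at $p$.

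Next I show that $I^+(p,U)\setminus A$ and $I^-(p,U)\setminus A$ lie in different components. Suppose instead that both contain points $q_-\ll p\ll q_+$ in the same component. A path joining them in that component, combined with the openness of $\ll$, can be used to derive a contradiction with the separation property: the line $\alpha$ would cross from $U_1$ to $U_2$ only at $p$. Having established this, any timelike curve in $U$ from $I^-(p,U)$ to $I^+(p,U)$ must pass from one component to the other. It therefore meets $A$, so $p\notin\mathrm{edge}(A)$.
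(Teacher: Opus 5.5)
The paper gives no proof of its own here; it cites O'Neill, Proposition 14.25. Your ($\Leftarrow$) half is correct and is the standard argument. In a small coordinate cylinder, the no-edge property plus achronality make each vertical timelike line meet $A$ exactly once, and achronality again makes the height function $h$ continuous. One small point: if $h(x_k)\to c=\pm a$, use openness at $(c',x)$ with $c'$ strictly between $h(x)$ and $c$, then move along the vertical line through $x_k$. That is what your ``for every $c'>h(x)$'' amounts to.

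The gap is in ($\Rightarrow$), at the claim that $I^-(p,U)$ and $I^+(p,U)$ lie in different components of $U\setminus A$.

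\emph{Why the step fails.} You cannot \emph{choose} $\alpha$ to be ``transverse in the topological sense'': that is exactly what must be shown. Your justification, that $\alpha$ ``would cross from $U_1$ to $U_2$ only at $p$'', presupposes it.

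Moreover, the ingredients you invoke cannot suffice. These are: two components with $p$ in both closures, $I^\pm(p,U)\cap A=\emptyset$, and openness of $\ll$. In two-dimensional Minkowski space with coordinates $(t,x)$, take $A=\{x=|t|\}$. This is a locally flat topological hypersurface with all these properties at $p=(0,0)$. Yet $I^+(p)$ and $I^-(p)$ both lie in $\{x<|t|\}$. Also $p\in\mathrm{edge}(A)$, since the segments $\{x=-\delta,\ |t|\le 2\delta\}$ run from $I^-(p,U)$ to $I^+(p,U)$ without meeting $A$. What fails is achronality at points other than $p$: $(-1,1)\ll(1,1)$, both in $A$. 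So achronality has to be used a second time.

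\emph{A fix.} Take $q_-\in I^-(p,U)$ and $q_+\in I^+(p,U)$.
\begin{enumerate}
\item The open set $I^+(q_-,U)\cap I^-(q_+,U)$ contains $p$, so it meets both components.
\item Suppose $q_\pm$ were in the same component $U_1$. Pick $r\in U_2$ in that set. The timelike curve $q_-\to r\to q_+$ in $U$ meets $A$ at some $a_1$ before $r$ and some $a_2$ after $r$. Then $a_1\ll r\ll a_2$, contradicting achronality.
\item Since there are only two components, all of $I^-(p,U)$ lies in one and all of $I^+(p,U)$ in the other. Your last sentence tacitly needs this for all points, not just some. With it, your conclusion that every such timelike curve meets $A$ goes through.
\end{enumerate}

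\emph{An alternative.} Apply invariance of domain to the projection of $A$ along the vertical lines, which is injective by achronality. This shows $A$ is locally a continuous graph, exactly as in your ($\Leftarrow$) part. A point $q=(t_q,x_q)\in I^+(p,U)$ with $t_q<h(x_q)$ would then give $p\ll q\ll (h(x_q),x_q)\in A$, again a contradiction. This route also avoids any appeal to Jordan--Brouwer or to local flatness of $A$.
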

\subsection{Ketterer's synthetic NEC}
\label{snec}
Now, we introduce the synthetic null energy condition proposed by Ketterer in \cite{ket}. For a spacetime $(M,g)$ with $g$ of regularity $C^2$, we say that the \emph{null energy condition holds} if 
\[\mathrm{Ric}(v,v) \ge 0\;\;\; \forall\;\mathrm{null\;vectors\;} v\in TM.\]
This assumption is a classical constraint on the Ricci curvature of spacetime that is considered physically reasonable and is one of the hypotheses of the Gannon--Lee theorem. In \cite{ket}, Ketterer proposed a synthetic condition that can be defined without any explicit mention of curvature and is equivalent to the NEC. Following his approach we will work in a slightly more general setting than the one in which the Gannon--Lee theorem is classically stated, that of weighted manifolds. \\
\indent A \emph{weighted spacetime} is a triple $(M,\,g,\, e^{-V}\mathrm{Vol}_M)$, where $V \in C^\infty(M)$ and Vol$_M$ is the volume form of $M.$ 
The \emph{Bakry--Emery Ricci tensor} of a weighted manifold $(M,\,g,\, e^{-V}\mathrm{Vol}_M)$ is given by 
$$\mathrm{Ric}^V := \mathrm{Ric} + \mathrm{Hess}\, V.$$
We say that a weighted spacetime satisfies the \emph{Bakry--Emery N-null energy condition} for $N > n-1$ if 
$$(N'-n+1)\mathrm{Ric}^V(v,v) \ge \langle \nabla V, v\rangle^2$$
for any null vector $v\in TM$ and for any $N' > N.$
\vskip 5pt
We now outline how to define Ketterer's equivalent to the Bakry--Emery $N$-null energy condition. For more details we refer to \cite{ket}. In a (weighted) spacetime, a \emph{null hypersurface} $\mathcal{H}$ is a $C^2$ hypersurface such that the restriction of the metric to $\mathcal{H}$ is degenerate. 
The metric tensor induces a degenerate $(n-1)$-volume form $\mathrm{Vol}_\mathcal{H}$ on $\mathcal{H}.$ For any Borel subset $A$ of a spacelike surface $S\subseteq \mathcal{H}$ of dimension $n-1$, this gives a definition of volume by imposing
$$\mathrm{Vol}_\mathcal{H}(A):=\int_A \mathrm{dVol}_\mathcal{H}.$$
If $V\in C^\infty(M)$ is the weight function of our spacetime, we define $m_\mathcal{H} := e^{-V}\mathrm{Vol}_\mathcal{H}$ and
we denote with $\mathcal{P}(M,m_\mathcal{H})$ the set of $m_\mathcal{H}$-absolutely continuous probability measures on $M$.  For a probability measure $\mu\in\mathcal{P}(M, m_\mathcal{H})$ with density $\rho$ we define its $N$\emph{-Renyi entropy}
$$ S(\mu\,\vert\, m_\mathcal{H}) := -\int_M \rho^{1-\frac{1}{N}}\mathrm{d}m_\mathcal{H}. $$

Given a point $p$ in a null hypersurface $\mathcal{H}$, there is a unique null direction in $T_p\mathcal{H}$ which is normal to $T_p\mathcal{H}.$ Time orientability of $M$ allows us to choose continuously a vector in this direction, giving rise to a null normal vector field $K$ along $\mathcal{H}.$ Locally we can reparametrize the flow curves of $K$ to make them null geodesics, which we call the \emph{null geodesic generators} of $\mathcal{H}.$ Since we assume that $\mathcal{H}$ has regularity $C^2$, it follows that $K$ will be of regularity at least $C^1$. From this, we can define a relation on $\mathcal{H}$ as follows:
$$R_\mathcal{H} := \{(x,y)\in \mathcal{H}^2\,\vert\, \exists\;\mathrm{a\; flow\; curve\;} \gamma\;\mathrm{of\;} K\; \mathrm{s.t.}\; \gamma(s) = x,\;\gamma(t) = y\; \mathrm{and}\; s\le t\}.$$
\begin{definition}[Null coupling]
    A \emph{null coupling} on $M$ is a probability $\pi \in \mathcal{P}(M^2)$ such that supp$(\pi)\subseteq R_\mathcal{H}$ for some null hypersurface $\mathcal{H}.$
\end{definition}
\begin{definition}[Null connected probability measures]
We say that two probability measures $\mu_0,\mu_1\in \mathcal{P}(M,m_\mathcal{H})$ are \emph{null connected} if there exists a null coupling $\pi$ with $\mu_0$ and $\mu_1$ as its marginals.
\end{definition}
For any $(x,y)\in R_\mathcal{H}$ we can consider the unique flow curve $\gamma_{x,y}$ of $K$ joining $x$ and $y$ and its affine reparametrization  $\tilde{\gamma}_{x,y} : [0,1]\rightarrow \mathcal{H}$. We denote with $\mathcal{G}(\mathcal{H})$ the set of all such affine reparametrized curves which inherits a structure of a Polish space.
\begin{definition}[Dynamical null coupling]
A \emph{dynamical null coupling} is a probability measure on $\mathcal{G}(\mathcal{H}).$
\end{definition}
If $\pi$ is a null coupling, the map $\Upsilon : R_\mathcal{H}\rightarrow \mathcal{G}(\mathcal{H})$ that sends $(x,y)$ to $\tilde{\gamma}_{x,y}$ is measurable and allows us to define the dynamical null coupling $\Pi := \Upsilon_\# \pi$. Now, if $e_t : \mathcal{G}(\mathcal{H})\rightarrow \mathcal{M}$ is the evaluation map at $t\in [0,1]$, we can define $\mu_t := (e_t)_\#\Pi.$ We call $(\mu_t)_{t\in[0,1]}$ the \emph{null displacement interpolation} induced by $\pi$ and say that $\pi$ is a null coupling between $\mu_0$ and $\mu_1.$ \vskip 5pt
We say that a probability measure is \emph{acausal} if its support is contained in an acausal spacelike $C^2$ submanifold. Given two null connected acausal probability measures $\mu_0,\mu_1\in P(M,m_\mathcal{H})$, with $\Sigma_0$ and $\Sigma_1$ spacelike hypersurfaces such that supp$(\mu_0)\subseteq \Sigma_0$
and supp$(\mu_1)\subseteq \Sigma_1$, it follows that actually supp$(\mu_0)\subseteq \mathcal{H}\,\cap\, \Sigma_0=:S_0$ and supp$(\mu_1)\subseteq \mathcal{H}\,\cap\, \Sigma_1=:S_1$. In this setting we have the two following crucial lemmata, both proved by Ketterer in \cite{ket}.
\begin{lemma}[Existence of a transport map, \cite{ket}, Lemma 3.2]
\label{transport_map}
Let $\mu_0,\mu_1,S_0,S_1$ be as above. Then there exists a $C^1$ function $r: U \rightarrow \R$, where $U$ is an open subset of $S_0$ that contains supp$(\mu_0)$, such that the map 
$$T:U\longrightarrow \mathcal{H},\qquad x\longmapsto \exp_x(r(x)K(x))$$
satisfies $\pi =(\mathrm{id}_{\mathrm{supp}(\mu_0)},T)_\# \mu_0.$ 
We call $T$ a transport map.
\end{lemma}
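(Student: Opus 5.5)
The plan is to build the transport map directly from the geometry of the null generators of $\mathcal{H}$. Every pair in the support of $\pi$ will be joined by a unique generator, and $S_1$ is a spacelike cross-section that each generator crosses at most once. This forces the coupling to be deterministic, so it should be given by a map. Concretely, fix $x\in S_0$. Let $\gamma_x(s)=\exp_x(sK(x))$ be the null geodesic generator of $\mathcal{H}$ through $x$; it is defined on some interval and stays in $\mathcal{H}$, at least locally. Since $\Sigma_1$ is acausal and $\gamma_x$ is a causal curve, $\gamma_x$ meets $\Sigma_1$ at most once. Hence for each $x$ there is at most one parameter $r(x)\ge 0$ with $\gamma_x(r(x))\in S_1$. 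Because $\mathrm{supp}(\pi)\subseteq R_\mathcal{H}$ and $\pi$ has marginals $\mu_0,\mu_1$, for $\mu_0$-a.e.\ $x$ (indeed for every $x\in\mathrm{supp}(\mu_0)$, after a closure argument) there is some $y\in S_1$ with $(x,y)\in R_\mathcal{H}$. Such a $y$ lies on the flow curve of $K$ through $x$, which is a reparametrization of $\gamma_x$. So $y=\gamma_x(r(x))$ is uniquely determined, and $\pi$ is concentrated on the graph of $T(x)=\exp_x(r(x)K(x))$. The standard fact that a coupling concentrated on a graph of a measurable map equals $(\mathrm{id},T)_\#\mu_0$ then gives the representation of $\pi$.

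The regularity and the domain of $r$ will come from the implicit function theorem. Near a point $x_0\in\mathrm{supp}(\mu_0)$ with $y_0=\gamma_{x_0}(r(x_0))$, choose a local defining function $f$ of $\Sigma_1$ near $y_0$; it is $C^2$, with $df\ne0$ and $\nabla f$ timelike. Consider
\[F(x,s)=f(\exp_x(sK(x))).\]
Since $K$ is $C^1$ and $\exp$ is $C^1$ for a $C^2$ metric, $F$ is $C^1$. Its $s$-derivative at $(x_0,r(x_0))$ is $\langle \nabla f,\gamma_{x_0}'\rangle$, which is nonzero because $\gamma'_{x_0}$ is null and nonzero while $\nabla f$ is timelike. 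So $\Sigma_1$ is transversal to the generator. The implicit function theorem yields a $C^1$ function $r$ on a neighborhood of $x_0$ in $S_0$ solving $F(x,r(x))=0$. We also need $\exp_x(r(x)K(x))\in\mathcal{H}$. This holds because generators of $\mathcal{H}$ are the geodesics with initial velocity $K$, so they remain in $\mathcal{H}$ as long as they do so near $x_0$; here we use the openness of the set where the generator flow is defined. The local solutions agree on overlaps by the uniqueness from acausality, so they patch into a $C^1$ function on an open $U\supseteq\mathrm{supp}(\mu_0)$.

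I expect the main obstacle to be existence of the intersection for every point of the support, together with openness of $U$. The marginal condition only yields a $\mu_0$-full set of $x$ admitting a partner $y$. I would handle this with closedness: $\mathrm{supp}(\pi)$ is closed and projects densely onto $\mathrm{supp}(\mu_0)$. If $x_k\to x$ with partners $y_k\in S_1$, I would extract a convergent subsequence of the $y_k$ (locally, using compactness of $\mathrm{supp}(\mu_1)$ if needed, or a local compactness assumption), and use closedness of $R_\mathcal{H}$ along the $C^1$ flow to obtain a partner for $x$. The transversality argument then supplies the open neighborhood. A secondary subtlety is the global issue that the flow parameter of $K$ differs from the affine parameter. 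This is harmless, since reparametrization does not change the image curve, and uniqueness of the intersection point is all that is needed.
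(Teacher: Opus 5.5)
The paper gives no proof of this lemma (it is quoted from Ketterer), so your argument has to stand on its own. Its core is sound. Acausality of $\Sigma_1$ makes the partner of $x$ on its generator unique, so $\pi$ is concentrated on a graph. Transversality ($\nabla f$ timelike, $\gamma_{x_0}'$ null) and the implicit function theorem give a $C^1$ parameter $r$ near every point that \emph{has} a partner. Uniqueness lets you patch, and openness of the flow domain keeps $T(x)$ in $\mathcal{H}$. Writing $p_1$ for the first projection, this proves the lemma with $U$ an open set containing $p_1(\mathrm{supp}\,\pi)$, which has full $\mu_0$-measure.

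The gap is the step you flagged: upgrading $p_1(\mathrm{supp}\,\pi)$ to all of $\mathrm{supp}(\mu_0)$. It cannot be closed under the stated hypotheses.

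\emph{Where your argument works.} If $\mathrm{supp}(\mu_1)$ is compact, your closure argument succeeds. Then $\mathrm{supp}\,\pi\subseteq \mathrm{supp}(\mu_0)\times\mathrm{supp}(\mu_1)$, and projection along a compact factor is a closed map. So $p_1(\mathrm{supp}\,\pi)$ is closed and dense in $\mathrm{supp}(\mu_0)$, hence equal to it. Closedness of $\mathrm{supp}\,\pi$ already puts the limit pair in $R_\mathcal{H}$; you do not need closedness of $R_\mathcal{H}$.

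\emph{Where it fails.} Local compactness of $M$ does not stop the partners $y_k$ from escaping to infinity, and then the conclusion fails. Here is a counterexample in Minkowski space $\R^{1,2}$:
\begin{itemize}
\item Let $\mathcal{H}$ be the future light cone of $p=(0,1,0)$ minus its vertex, $\Sigma_0=\{t=1\}$, and $\Sigma_1=\{t=\sqrt{1+x^2+y^2}\}$, which is spacelike and acausal.
\item For $e_u=(1,-u,\sqrt{1-u^2})$, the generator $s\mapsto p+se_u$ meets $\Sigma_1$ if and only if $u>0$, namely at $s=1/u$.
\item Let $\mu_0$ be normalized length on the quarter circle $\{p+e_u: u\in[0,1]\}\subseteq S_0$. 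Set $T(p+e_u)=p+u^{-1}e_u$ for $u\in(0,1]$, $\pi=(\mathrm{id},T)_\#\mu_0$ and $\mu_1=T_\#\mu_0$.
\item Because $T(p+e_u)\to\infty$ as $u\downarrow 0$, the graph of $T$ is closed. Hence $\mathrm{supp}\,\pi\subseteq R_\mathcal{H}$, and all hypotheses of the lemma hold.
\item Write $K(p+e_u)=c(u)\,e_u$ with $c>0$ continuous. By continuity and full support of $\mu_0$ on the arc, any admissible $r$ must equal $(u^{-1}-1)/c(u)$ at $p+e_u$ for every $u\in(0,1]$.
\item This blows up near $p+e_0\in\mathrm{supp}(\mu_0)$, whose generator never reaches $\Sigma_1$. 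So no continuous $r$ exists on an open $U\supseteq\mathrm{supp}(\mu_0)$.
\end{itemize}

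So you must either add the hypothesis that $\mathrm{supp}(\mu_1)$ is compact, in which case your proof is complete, or weaken $U\supseteq\mathrm{supp}(\mu_0)$ to $U\supseteq p_1(\mathrm{supp}\,\pi)$.
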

\begin{remark}
\label{diffeo_transp}
    Following the proof of the previous result, Ketterer also showed that we can extend the transport map $T$ to an entire family of maps $(T_t)_{t\in[0,1]},$ defined as follows: we set $\tilde{K}(x) = r(x)K(x)$ and define $T_t(x) = \exp_x(t\tilde{K}(x)).$ It holds that $\mu_t = (T_t)_\# \mu_0$ and that supp$\,\mu_t \subseteq T_t(S_0)=S_t$. Moreover the maps $T_t$ are diffeomorphisms from $S_0$ to $S_t$ for any $t\in[0,1],$ and $S_t$ is spacelike and acausal.
\end{remark}
\begin{lemma}[Change of variables, \cite{ket}, Lemma 3.6 and 3.7]
\label{area}
Let $\mu_0, \mu_1$ be $m_\mathcal{H}$-absolutely continuous acausal null connected probability measures and let $(\mu_t)_{t\in[0,1]}$ be the induced null displacement interpolation. Then each $\mu_t$ is absolutely continuous w.r.t. $m_\mathcal{H}$ and its density $\rho_t$ satisfies
$$e^{-V\circ T_t(x)} \rho_t(T_t(x)) \det DT_t(x) = \rho_0(x) e^{-V(x)}.$$
\end{lemma}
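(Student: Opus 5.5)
The identity is the change-of-variables formula for the pushforward $\mu_t=(T_t)_\#\mu_0$ from Remark \ref{diffeo_transp}, with both measures written against the weighted reference measure $m_\mathcal{H}=e^{-V}\mathrm{Vol}_\mathcal{H}$. Throughout, $\det DT_t(x)$ denotes the Jacobian of $T_t\colon S_0\to S_t$ with respect to the Riemannian volumes induced by $g$ on the spacelike cross-sections $S_0$ and $S_t$. These volumes are exactly the restrictions of $\mathrm{Vol}_\mathcal{H}$ to $S_0$ and $S_t$.

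First I would make precise the meaning of ``$\mu_t$ has density $\rho_t$ w.r.t.\ $m_\mathcal{H}$''. Since $\mu_t$ is supported on the $(n-1)$-dimensional spacelike submanifold $S_t\subseteq\mathcal{H}$, the reference measure on $S_t$ is $e^{-V}\mathrm{Vol}_{S_t}$, where $\mathrm{Vol}_{S_t}$ is the volume of the induced (nondegenerate) metric on $S_t$. By Remark \ref{diffeo_transp}, $S_t$ is spacelike and acausal. Also, $T_t\colon S_0\to S_t$ is a $C^1$ diffeomorphism, because $r$ is $C^1$ and $K$ is $C^1$ by Lemma \ref{transport_map}. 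Hence the area formula applies: for every nonnegative Borel $f$ on $S_t$,
\[
\int_{S_t} f\,\mathrm{dVol}_{S_t}=\int_{S_0} f\circ T_t(x)\,\det DT_t(x)\,\mathrm{dVol}_{S_0}(x).
\]
Here $\det DT_t>0$, since $T_0=\mathrm{id}$ and $T_t$ is a diffeomorphism for each $t$, so the determinant cannot change sign along $t\mapsto T_t$.

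Next I would show that $\mu_t$ is absolutely continuous and identify its density. Let $A\subseteq S_t$ be Borel with $m_\mathcal{H}(A)=0$. Then $\mathrm{Vol}_{S_t}(A)=0$, because $e^{-V}>0$. Since $T_t^{-1}$ is $C^1$, it maps null sets to null sets, so $\mathrm{Vol}_{S_0}(T_t^{-1}(A))=0$. Absolute continuity of $\mu_0$ then gives $\mu_t(A)=\mu_0(T_t^{-1}(A))=0$.

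For the formula, I would define a candidate density $\tilde\rho_t$ on $S_t$ by
\[
\tilde\rho_t(T_t(x)):=\frac{\rho_0(x)e^{-V(x)}}{e^{-V(T_t(x))}\det DT_t(x)},
\]
and set $\tilde\rho_t=0$ off $T_t(U)$. For Borel $A\subseteq S_t$, the area formula gives
\[
\int_A \tilde\rho_t e^{-V}\,\mathrm{dVol}_{S_t}=\int_{T_t^{-1}(A)}\rho_0(x)e^{-V(x)}\,\mathrm{dVol}_{S_0}(x)=\mu_0(T_t^{-1}(A))=\mu_t(A).
\]
So $\tilde\rho_t=\rho_t$ $m_\mathcal{H}$-a.e. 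Rearranging the definition of $\tilde\rho_t$ yields the stated identity, for $\mu_0$-a.e.\ $x$, or everywhere on $U$ once the density is defined pointwise this way.

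The main obstacle is not the calculation but the measure-theoretic bookkeeping. The identity has to be read with the correct Jacobian, namely the one relative to the induced volumes rather than a coordinate Jacobian. One also has to check that $\mathrm{Vol}_\mathcal{H}$ restricted to each cross-section $S_t$ coincides with its Riemannian volume. I would verify this second point via the defining property of the degenerate form: contracting $\mathrm{Vol}_\mathcal{H}$ along a spacelike complement to $K$ gives the induced volume. Because of this, the result is independent of the chosen cross-section. If needed, one can instead reduce to local charts adapted to the $C^1$ flow of $K$, in which $T_t$ is a graph map and the formula is the Euclidean change of variables.
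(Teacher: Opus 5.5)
The paper gives no proof of this lemma and simply cites Ketterer (Lemmas 3.6–3.7). Your argument is the standard proof of this Jacobian identity, and it is correct: you write $\mu_t=(T_t)_\#\mu_0$ and apply the area formula to the $C^1$ diffeomorphism $T_t\colon S_0\to S_t$, reading $\det DT_t$ as the Jacobian relative to the induced Riemannian volumes, which are the restrictions of $\mathrm{Vol}_\mathcal{H}$.

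One cosmetic remark. The area-formula Jacobian is nonnegative by definition and positive because $T_t$ is a diffeomorphism, so your sign-continuity argument is unnecessary. The diffeomorphism property should be credited to Remark~\ref{diffeo_transp} rather than to $r,K\in C^1$, which only give regularity.
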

We can finally introduce Ketterer's synthetic null energy condition, which consists in requesting convexity of this entropy along the null displacement interpolation between any two acausal null connected probabilities in $\mathcal{P}(M,m_\mathcal{H}).$
\begin{definition}[Synthetic NEC]
\label{synth_nec}
    We say that a weighted spacetime $(M^{n+1},\, g,\, e^{-V}\mathrm{Vol}_M)$ satisfies the \emph{synthetic N-null energy condition} for $N > n-1$ if for every null hypersurface $\mathcal{H}$ and for any two acausal, null connected probability measures $\mu_0, \mu_1 \in \mathcal{P}(M, m_\mathcal{H})$, it holds 
    $$ S_{N'}(\mu_t\,\vert\, m_\mathcal{H}) \le (1-t) S_{N'}(\mu_0\,\vert\, m_\mathcal{H}) + t S_{N'}(\mu_1\,\vert\,m_\mathcal{H})\;\;\; \forall N'\ge N,$$
    where $\mu_t$ is the null displacement interpolation between $\mu_0$ and $\mu_1.$
\end{definition}
The following theorem is due to Ketterer and is proved in \cite{ket}.
\begin{theorem}[Equivalence of the NEC and the synthetic NEC]
A weighted spacetime $(M,g,e^{-V}\mathrm{Vol}_M)$ satisfies the Bakry-Emery $N$-null energy condition for $N > n-1$ if and only if it satisfies the synthetic $N$-null energy condition for $N > n-1$.
\end{theorem}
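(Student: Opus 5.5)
The plan is to reduce both directions to a pointwise statement about the Jacobian $\det DT_t$ along null generators, using Lemma \ref{transport_map}, Remark \ref{diffeo_transp} and the change of variables formula of Lemma \ref{area}.

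\emph{Computing the entropy along an interpolation.} Fix a null hypersurface $\mathcal{H}$ and acausal null connected $\mu_0,\mu_1\in\mathcal{P}(M,m_\mathcal{H})$. Take the transport maps $T_t(x)=\exp_x(t r(x)K(x))$ and set
\[
y_t(x):=\bigl(e^{-V(T_t(x))+V(x)}\det DT_t(x)\bigr)^{1/N'}.
\]
By Lemma \ref{area}, $\rho_t(T_t x)^{-1/N'} = \rho_0(x)^{-1/N'}\, y_t(x)$. Pushing forward with $T_t$ then gives
\[
S_{N'}(\mu_t\,\vert\,m_\mathcal{H})=-\int_{S_0}\rho_0^{1-1/N'}\, y_t\,\mathrm{d}m_\mathcal{H}.
\]
So convexity of $t\mapsto S_{N'}(\mu_t)$ follows from concavity of $t\mapsto y_t(x)$ for $\mu_0$-a.e.\ $x$. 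Conversely, concentrating $\mu_0$ near a single point recovers this pointwise concavity from the synthetic condition.

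\emph{Identifying the pointwise condition with curvature.} Along the generator $\gamma(t)=T_t(x)$, $\det DT_t$ is the Jacobian of the null congruence restricted to the $(n-1)$-dimensional screen. Writing $J=\det DT_t$, the Raychaudhuri equation gives
\[
\frac{(\log J)''}{1} = -\mathrm{Ric}(\dot\gamma,\dot\gamma)-|\sigma|^2-\tfrac{1}{n-1}\bigl((\log J)'\bigr)^2 .
\]
Here $\sigma$ is the shear and $\dot\gamma=r(x)K$, so the $r(x)^2$ factor is harmless. With weight, set $\ell=\log J - V\circ\gamma$. Then $\ell''=(\log J)''-\mathrm{Hess}V(\dot\gamma,\dot\gamma)$, and the standard Bakry--Émery splitting $(a+b)^2\le \frac{a^2}{n-1}(n-1+\ldots)$ applies, namely
\[
\frac{(\ell')^2}{N'}\le \frac{((\log J)')^2}{n-1}+\frac{\langle\nabla V,\dot\gamma\rangle^2}{N'-n+1}.
\]
Combining these yields
\[
\ell''+\tfrac{1}{N'}(\ell')^2\le -\mathrm{Ric}^V(\dot\gamma,\dot\gamma)+\frac{\langle\nabla V,\dot\gamma\rangle^2}{N'-n+1}.
\]
Since $y=e^{\ell/N'}$, we have $y''=\tfrac{y}{N'}\bigl(\ell''+\tfrac{1}{N'}(\ell')^2\bigr)$. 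Hence the Bakry--Émery $N$-NEC gives $y''\le 0$, which proves that the NEC implies the synthetic NEC.

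\emph{Converse.} Suppose the synthetic condition holds but $(N'-n+1)\mathrm{Ric}^V(v,v)<\langle\nabla V,v\rangle^2$ for some null $v$ at $p$ and some $N'>N$. Construct a local spacelike codimension-2 surface $S_0\ni p$ with null normal $v$ such that the outgoing null hypersurface $\mathcal{H}$ has vanishing shear at $p$. Also choose the initial expansion so that equality holds in the splitting inequality, i.e.\ $\frac{(\log J)'}{n-1}=-\frac{\langle\nabla V,\dot\gamma\rangle}{N'-n+1}$ at $t=0$; this is possible by prescribing the second fundamental form of $S_0$. Then $y''(0)>0$ at $x=p$, and by continuity $y_t$ is strictly convex near $p$ for small $t$. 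Taking $\mu_0$ uniform on a tiny ball of $S_0$ around $p$ and $\mu_1=(T_\epsilon)_\#\mu_0$, which are acausal and null connected, violates the convexity of $S_{N'}$.

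\emph{Main obstacle.} The main difficulty is the converse construction: producing a $C^2$ null hypersurface $\mathcal{H}$ with prescribed shear and expansion at $p$. I would take $S_0=\exp_p$ of a quadratic graph in a spacelike slice, with second fundamental form along $v$ chosen as a multiple of the identity, and restrict to short times before any focal points occur. A further technical point is ensuring that the $\mu_0$-a.e.\ concavity argument is valid; this only needs the $C^1$ regularity of $r$ from Lemma \ref{transport_map} and the joint $C^2$ dependence of $T_t$ on $t$.
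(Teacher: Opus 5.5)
You follow essentially Ketterer's argument, which the paper cites from \cite{ket} without proving. The ingredients are: the Jacobian of the null transport on the $(n-1)$-dimensional screen, the Raychaudhuri equation, the splitting $\frac{(a+b)^2}{N'}\le\frac{a^2}{n-1}+\frac{b^2}{N'-n+1}$, and, for the converse, a shear-free cross-section whose expansion makes that splitting sharp. Your computations are correct:
\begin{itemize}
\item $\rho_t(T_tx)^{-1/N'}=\rho_0(x)^{-1/N'}y_t(x)$ and the resulting entropy formula;
\item $y''=\frac{y}{N'}\bigl(\ell''+\frac{(\ell')^2}{N'}\bigr)$;
\item the bound $y''\le\frac{y}{N'}\bigl(-\mathrm{Ric}^V(\dot\gamma,\dot\gamma)+\frac{\langle\nabla V,\dot\gamma\rangle^2}{N'-n+1}\bigr)$.
\end{itemize}
So the direction Bakry--\'Emery NEC $\Rightarrow$ synthetic NEC is complete. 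For $N'=N$, let $N'\downarrow N$ in the curvature inequality.

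The step that fails as written is the claim in the converse that $\mu_0$ and $\mu_1=(T_\varepsilon)_\#\mu_0$ are \emph{acausal}. In the sense of Section~\ref{snec}, acausality is a global property of $M$ and does not follow from the supports being small. If $I^+(x)=M$ for every $x$ (a totally vicious spacetime), no set with two points is acausal. For $n\ge 2$ there are then no acausal measures in $\mathcal{P}(M,m_\mathcal{H})$, so the synthetic condition holds vacuously. The Bakry--\'Emery condition can nevertheless fail. Take $S^1\times\R^n$ with $g=e^{2f}(-dt^2+|dx|^2)$, $f=f(x)$, and $V\equiv 0$. For null $v$ one has $\mathrm{Ric}(v,v)=-(n-1)\bigl(\mathrm{Hess}f(v,v)-(vf)^2\bigr)$, which is negative for $v=\partial_t+\partial_{x_1}$ at a nondegenerate minimum of $f$.

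So your converse needs a causality hypothesis near $p$, or a localized notion of acausality; the statement as literally phrased in the paper needs the same. Strong causality at $p$ suffices. Choose the ball in $S_0$ and $\varepsilon$ so small that both supports lie in a causally convex neighbourhood $W$ of $p$ contained in a convex normal neighbourhood. Small pieces of spacelike submanifolds are acausal relative to such a neighbourhood. Causal convexity upgrades this to acausality in $M$: a causal curve meeting one of the pieces twice would have its intermediate segment inside $W$. With this added, the rest of your converse goes through: $y''>0$ at $(p,0)$, hence near $(p,0)$ by continuity, hence strict failure of convexity of $S_{N'}$ for small $\varepsilon$.
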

The synthetic NEC can be reformulated in terms of the densities $\rho_t$ instead of the entropy, thanks to the following lemma, also proved by Ketterer in \cite{ket}.
\begin{lemma}[Essential concavity]
\label{concavity}
Let $(M,\, g,\, e^{-V}\mathrm{Vol}_M)$ be a weighted time oriented spacetime of dimension $n+1$. Then $M$ satisfies the synthetic $N$-null energy condition for $N > n-1$ if and only if for any acausal probabilities $\mu_0, \mu_1 \in \mathcal{P}(M,m_\mathcal{H})$ that are null connected via a dynamical coupling $\Pi$ supported on a null hypersurface $\mathcal{H}$ it holds for any $t\in [0,1]$ 
$$\rho_t^{-\frac{1}{N}}(\tilde{\gamma}_t)\ge (1-t) \rho_0^{-\frac{1}{N}}(\tilde{\gamma}_0) + t \rho_1^{-\frac{1}{N}}(\tilde{\gamma}_1)\quad for\; \Pi\mathrm{-}almost\; every\; \tilde{\gamma}\in \mathcal{G}(\mathcal{H}),$$
where $\mu_t$ is the associated null displacement interpolation and $\rho_t$ is the density of $\mu_t$.
\end{lemma}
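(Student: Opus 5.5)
The plan is to prove both implications by localizing along single null generators. The key is the change of variables formula of Lemma~\ref{area}. Throughout, fix a null hypersurface $\mathcal{H}$, acausal null connected $\mu_0,\mu_1\in\mathcal{P}(M,m_\mathcal{H})$ with supports in $S_0,S_1$, and the transport maps $T_t$ of Lemma~\ref{transport_map} and Remark~\ref{diffeo_transp}. Since $\mu_t=(T_t)_\#\mu_0$, we have $\Pi=(\gamma_x)_\#\mu_0$ with $\gamma_x(t)=T_t(x)$. So "$\Pi$-a.e.\ $\tilde\gamma$" means "$\mu_0$-a.e.\ $x\in S_0$", and $\rho_t(\tilde\gamma_t)=\rho_t(T_t(x))$. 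Lemma~\ref{area} then gives, for $\mu_0$-a.e.\ $x$,
\[
\rho_t(T_t(x))^{-\frac1N}=\rho_0(x)^{-\frac1N}\,\bigl(e^{V(x)-V(T_t(x))}\det DT_t(x)\bigr)^{\frac1N}=:\rho_0(x)^{-\frac1N}\,J_t(x)^{\frac1N}.
\]

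First I prove that pointwise concavity implies entropy convexity. By the change of variables $y=T_t(x)$ and Lemma~\ref{area}, with $\mathrm d m_\mathcal{H}(y)=e^{-V(T_t x)}\det DT_t(x)\,\mathrm d\mathrm{Vol}(x)$, one gets
\[
S_N(\mu_t\,\vert\,m_\mathcal{H})=-\int_{\{\rho_t>0\}}\rho_t^{-\frac1N}\,\mathrm d\mu_t=-\int_{S_0}\rho_t(T_t(x))^{-\frac1N}\,\mathrm d\mu_0(x).
\]
The same identity holds at $t=0,1$. Integrating the assumed pointwise inequality against $\mu_0$ then gives the convexity inequality for $S_N$. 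For $N'\ge N$ the same inequality is needed with exponent $1/N'$. I will obtain it by writing $\rho_t^{-1/N'}=\rho_0^{-1/N'}J_t^{1/N'}$, using concavity of $t\mapsto J_t^{1/N}$ (shown in the next step), and noting that $J^{1/N'}=(J^{1/N})^{N/N'}$ is a concave nondecreasing function of a concave function. Alternatively one can simply note that the lemma's statement is meant for all $N'\ge N$ as in Definition~\ref{synth_nec}.

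For the converse, assume the synthetic NEC and argue by contradiction. Suppose the pointwise inequality fails on a set $A\subseteq S_0$ with $\mu_0(A)>0$ for some $t_0$. Since $\rho_0^{-1/N}$ factors out, failure means that
\[
J_{t_0}(x)^{\frac1N}<(1-t_0)+t_0\,\frac{\rho_0(x)^{-\frac1N}\cdot}{\rho_0(x)^{-\frac1N}}\cdots,
\]
that is, concavity of $t\mapsto J_t(x)^{1/N}$ is violated between the endpoints $0,1$ at $t_0$. The rescaled inequality is the same, since $\rho_1(T_1x)^{-1/N}=\rho_0(x)^{-1/N}J_1(x)^{1/N}$. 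By continuity of $J$ in $x$ (the maps $T_t$ are $C^1$), the violation persists on a small relatively open neighborhood $B\subseteq S_0$ of a density point of $A$.

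Now replace $\mu_0$ by $\bar\mu_0=\mathrm{Vol}(B)^{-1}\mathrm{Vol}\llcorner B$ and transport it with the same maps $T_t$ restricted to $B$. The resulting $\bar\mu_0,\bar\mu_1=(T_1)_\#\bar\mu_0$ are still acausal, null connected and $m_\mathcal{H}$-absolutely continuous. Their interpolation has density values $\bar\rho_t(T_tx)^{-1/N}=c\,J_t(x)^{1/N}$, and by the entropy formula above, $S_N(\bar\mu_t)=-c\int_B J_t^{1/N}\,\mathrm d\bar\mu_0$. Integrating the strict pointwise violation over $B$ contradicts the synthetic NEC at $t_0$.

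The main obstacle is this localization step. Two things must be checked. First, restricting and renormalizing genuinely produces an admissible pair of measures; this is where Remark~\ref{diffeo_transp} is used, since $T_t$ is a diffeomorphism onto acausal spacelike $S_t$. Second, the transport between the localized measures must be the same $T_t$, which holds because the coupling is induced by flow lines of $K$ and is therefore unique. A secondary technical point is handling the set $\{\rho_0=0\}$ and the measure-zero ambiguities in "$\Pi$-a.e.", which is dealt with by the pushforward identification above.
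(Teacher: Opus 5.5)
The paper does not prove this lemma itself; it quotes it from Ketterer. So there is no internal argument to compare with. Judged on its own, your localization-along-generators proof is correct.

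Both directions rest on the right two facts. The first is the pushforward identity $S_N(\mu_t\,\vert\,m_\mathcal{H})=-\int\rho_t(T_tx)^{-1/N}\,\mathrm{d}\mu_0(x)$, which needs no Jacobian. The second is the factorization $\rho_t(T_tx)^{-1/N}=\rho_0(x)^{-1/N}J_t(x)^{1/N}$ from Lemma~\ref{area}, with $J_t$ continuous and positive on $U$. Hence:
\begin{itemize}
\item the pointwise inequality is a condition on $J$ alone;
\item its failure set is relatively open in $U$;
\item restricting to a small ball $B$ with $\overline{B}\subseteq U$ gives an admissible pair whose interpolation is $(T_t)_\#\bar\mu_0$.
\end{itemize}
The coupling is forced because each generator meets the acausal $S_1$ at most once.

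\textbf{Direction ``pointwise $\Rightarrow$ entropy''.} You appeal to concavity of $t\mapsto J_t^{1/N}$ ``shown in the next step''. That step runs under the opposite hypothesis and only produces a chord inequality. Fortunately the chord inequality is all you need. Take $\phi(s)=s^{N/N'}$, which is concave, nondecreasing and satisfies $\phi(1)=1$. The hypothesis then gives
\[
J_t^{1/N'}=\phi\bigl(J_t^{1/N}\bigr)\ge\phi\bigl((1-t)+t\,J_1^{1/N}\bigr)\ge(1-t)+t\,J_1^{1/N'}.
\]
Multiplying by $\rho_0^{-1/N'}$ and integrating against $\mu_0$ yields convexity of $S_{N'}$ for every $N'\ge N$. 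Drop the ``alternatively'' hedge. This step is genuinely required, since Definition~\ref{synth_nec} quantifies over all $N'\ge N$ while the pointwise condition involves only $N$.

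\textbf{Converse direction.}
\begin{itemize}
\item The garbled display should simply read $J_{t_0}(x)^{1/N}<(1-t_0)+t_0\,J_1(x)^{1/N}$.
\item The density-point argument is superfluous, because the violation set is open.
\item If you normalize the unweighted volume on $B$, then $\bar\rho_0=e^{V}/\mathrm{Vol}(B)$ is not constant. Either keep the positive factor $\bar\rho_0^{-1/N}$ inside the integral, or normalize $m_\mathcal{H}$ on $B$ instead. The strict inequality survives either way.
\end{itemize}
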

\section{Statement of the synthetic Gannon--Lee theorem}
\label{initial}
In this section we state both the classical and the synthetic version of the Gannon--Lee theorem, emphasizing the different conditions we are assuming in the formulation of the synthetic version. 
\vskip 5pt
We encode the theorem in the frame of the pattern \corr{incompleteness} theorem stated by Senovilla in \cite{sen}, which is a statement that outlines the general structures of many of the so called \corr{incompleteness} theorems, such as the theorems of Penrose \cite{penr}, Hawking \cite{hawk} and Hawking and Penrose \cite{hawk_penr}. In all these results the goal is to determine  conditions on a spacetime $M$ that are sufficient for it to be causally geodesically incomplete.  

\begin{theorem}[Pattern \corr{incompleteness} theorem]
    Let $(M,g)$ be a spacetime such that the following conditions hold:
    \begin{enumerate}
        \item a condition on the curvature of $M$;
        \item an initial or boundary condition;
        \item a causality condition.
    \end{enumerate}
    Then $(M,g)$ is causally geodesically incomplete, i.e., there exists a causal geodesic in $M$ whose maximal interval of definition is strictly contained in $\R.$
\end{theorem}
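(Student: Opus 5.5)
Strictly speaking, the statement above is a template rather than a theorem: it has content only once the three placeholders are replaced by concrete hypotheses. The plan below is therefore the common skeleton that each instance (Penrose, Hawking, Hawking--Penrose, and later Gannon--Lee) fills in. The argument goes by contradiction. We assume that $(M,g)$ is causally (or null) geodesically complete and let the three hypotheses interact through an auxiliary set. This set is typically the achronal boundary $\partial J^+(A)$, or $\partial J^-(A)$, of a suitable initial set $A$, for instance a trapped surface $S$ or one side $\Sigma_-$ of a hypersurface.

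The first step is the \emph{focusing step}, which combines the curvature condition with completeness. Along every causal geodesic leaving $A$ orthogonally in the relevant direction, we show that a focal (conjugate) point must occur within a uniformly bounded affine parameter. Classically this follows from the Raychaudhuri equation: the energy condition, e.g.\ $\mathrm{Ric}(v,v)\ge 0$ on null $v$, together with the initial sign of the expansion given by the trappedness of $A$, forces the expansion to diverge in finite parameter. In the synthetic framework of this paper I would replace this with the essential concavity of Lemma \ref{concavity}. One applies the change of variables of Lemma \ref{area} to a null displacement interpolation between measures supported on $S$ and on its transported image. Concavity of $\rho_t^{-1/N}$ then forces $\det DT_t$ to vanish before a parameter fixed by the initial decay rate, and this decay rate is precisely the synthetic trappedness assumption. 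Completeness guarantees that the geodesics actually reach this parameter.

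The second step is the \emph{compactness step}. Past a focal point a geodesic enters the chronological future $I^+(A)$, so by the Push-up Lemma it can no longer generate $\partial J^+(A)$. Hence $\partial J^+(A)$ is contained in the image of a compact set of initial data: the unit normal directions over compact $A$, times a compact parameter interval. It is therefore compact. By Proposition \ref{achronal_edge} it is also an achronal topological hypersurface without edge.

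The third step is the \emph{causality step}, and it produces the contradiction. The causality condition (global hyperbolicity, past reflectivity, or chronology together with a generic condition) allows one to compare this compact edgeless achronal set with a global object. For example, one projects $\partial J^+(A)$ along a timelike vector field into a Cauchy surface. The image is open by invariance of domain and closed by compactness, which contradicts non-compactness of the Cauchy surface or, in the Gannon--Lee setting, yields compactness of $\overline{\Sigma}_-$ and then a topological conclusion via covering spacetimes. The main obstacle is expected to lie in the focusing step. It is the only place where curvature enters, and in the synthetic setting it requires a careful choice of the null hypersurface and of the measures so that Lemma \ref{concavity} applies up to the first focal point. The remaining steps are essentially causal-theoretic and transfer unchanged.
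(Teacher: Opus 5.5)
You are right that this is a schema, not a theorem. The paper quotes it from Senovilla \cite{sen} only as an organizing frame and gives no proof; none is possible until the three placeholders are filled in. It merely remarks that the Gannon--Lee theorem fits the pattern after contraposition, i.e.\ after dropping null completeness and adding non-surjectivity of $i_\#$ to the initial condition. Your opening observation is therefore the correct response, and there is no proof in the paper to compare against.

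The skeleton you then sketch is a reasonable heuristic. Measured against the one instance the paper actually proves, however, it is off in three places.

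\emph{The achronal set.} The relevant set is not a boundary built on $\Sigma_-$. Only the $K_-$ direction is trapped, so the paper works with $T=\partial I^+(\Sigma_+)\setminus\Sigma_+$. Its generators issuing from $S$ run along $K_-$; those along $K_+$ are ruled out because they enter $I^+(\Sigma_+)$. By contrast, $\partial J^+(\overline{\Sigma}_-)$ contains $\Sigma_-$ itself together with untrapped $K_+$ generators, and has no reason to be compact.

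\emph{Where causality enters.} Your compactness step tacitly assumes that every point of the achronal boundary lies on a null generator emanating from $A$. This is precisely where the causality condition is needed. In Penrose's theorem, global hyperbolicity yields $\partial J^+(S)=J^+(S)\setminus I^+(S)$. In Proposition~\ref{compactness}, most of step 1 serves exactly to prove $T\subseteq\mathcal{H}^+$: putting $q_k$ in $I^+(S)$ via the piercing, the limit curve argument, unboundedness of $t_k$, and past reflectivity. Focusing plus completeness only gives relative compactness of $\mathcal{H}^+$. So causality is not confined to a final step. Also, the fact that a generator enters $I^+(S)$ past a focal point is Theorem~\ref{pert_timelike}; the Push-up Lemma only keeps it there.

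\emph{The later steps in the synthetic setting.} These do not transfer unchanged. The covering argument needs the synthetic NEC on $\tilde M$. The paper obtains this from its local-to-global Theorem~\ref{local_to_global} via Corollary~\ref{covering}, rather than from Ketterer's equivalence with the classical NEC, which it deliberately avoids. So the curvature condition enters a second time, and this is a new technical ingredient of the paper.
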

The energy condition for the Gannon--Lee theorem is the NEC, which we replace with the synthetic $N$-null energy condition. \\
\indent The causality condition is the same in the classical and in the synthetic version of the theorem, and is defined as follows:
\begin{definition}[Past reflecting spacetime]
\label{past_refl}
We say that a spacetime $M$ is \emph{past reflecting} if for any $p,q \in M$ one (hence both) of the following equivalent conditions holds: 
\begin{itemize}
    \item $I^+(p)\supseteq I^+(q) \Rightarrow I^-(p) \subseteq I^-(q);$
    \item $q\in \overline{I^+(p)}\Rightarrow p\in \overline{I^-(q)}.$
\end{itemize}
For a proof of the equivalence of the two conditions see for example Minguzzi \cite{ming_2}, Section 4.1.
\end{definition}
Now we describe the initial condition, which as the energy condition requires some modifications from the classical to the synthetic version. \\
\indent Consider a $C^2$ spacelike hypersurface $\Sigma \subseteq M$ in a weighted spacetime and a $C^2$ spacelike submanifold $S\subseteq \Sigma$ of codimension 2 in $M$. Suppose also that $S$ separates $\Sigma$, i.e., that $\Sigma\setminus S$ is the union of two disjoint open submanifolds $\Sigma_-$ and $\Sigma_+.$ This condition implies that the normal bundle of $S$ in $\Sigma$ is trivial and thus that we can choose normal vector fields $N_\pm$ along $S$ that are tangent to $\Sigma$ such that $N_+$ (the outer normal) points toward $\Sigma_+$ and $N_-$ (the inner normal) points toward $\Sigma_-.$ If we denote by $X$ the future directed timelike normal vector field to $\Sigma$ we get two null normal vector fields $K_\pm = \restr{X}{S} + N_\pm$ along $S$ that form at any $p\in S$ a null base of the normal space $T_pS^\perp\subseteq T_pM.$
\begin{definition}[Inner future converging surface, \cite{ket}, Definition 4.8]
\label{bak_em_trap}    With the above notation, we say that $S$ is \emph{Bakry--Emery inner trapped} or \emph{Bakry--Emery inner future converging} if for any $p\in S$ we have $$\langle H(p), K_-(p)\rangle + \langle \nabla V(p), K_-(p)\rangle>0, $$
    where $H(p)$ is the mean curvature vector of $S$.
    Analogously, using $K_+$, we may define an \emph{outer future converging} surface. 
\end{definition} 
Note that since $\Sigma$ and $S$ are of class $C^2$, the null normal vector fields $K_\pm$ are of regularity $C^1.$ Thus the mean curvature is defined classically and it is continuous. 
\begin{remark}
    When working in the non-weighted case the second term vanishes. That is the classical definition of inner/outer trapped surface.
\end{remark}

\begin{definition}[Asymptotically regular hypersurface, \cite{schinnerl_steinbauer}, Definition 2.5]
\label{asymp_reg}
    We say that a spacelike, connected partial Cauchy surface $\Sigma \subseteq M$ of regularity $C^2$ is an \textit{asymptotically regular hypersurface} if it contains a compact, connected, codimension $2$ submanifold $S$ such that:
    \begin{itemize}
        \item the complement $\Sigma\setminus S$ is the union of two disjoint submanifolds $\Sigma_+,\Sigma_-$ such that $\overline{\Sigma}_+$ is non-compact and $\Sigma_-$ is connected;
        \item the map $h_\# : \pi_1(S)\rightarrow \pi_1(\overline{\Sigma}_+)$, induced by the inclusion $h: S\rightarrow \overline{\Sigma}_+$, is surjective;
        \item $S$ is Bakry--Emery inner future converging.
    \end{itemize}
\end{definition}
Given an asymptotically regular hypersurface $\Sigma,$ we call a submanifold $S$ satisfying the above properties an \textit{enclosing surface}. Moreover, we say that $\Sigma$ admits a \textit{piercing} if there exists a timelike vector field such that all of its flow curves intersect $\Sigma$ exactly once. Given a piercing of $\Sigma,$ its flow curves foliates $M$ and we can define a continuous retraction $\rho_X : M\rightarrow \Sigma$ that sends a point $x\in M$ to the unique intersection of the flow curve passing through $x$ with $\Sigma.$
Now, for the initial condition of the classical Gannon--Lee theorem, we will assume that our spacetime $(M,g)$ contains an asymptotically regular hypersurface admitting a piercing.
\vskip 5pt
To modify the definition of asymptotically regular hypersurface we just need a synthetic analogue for the notion of future converging surface. Consider for each $p\in S$ the null geodesic $\gamma_p$ starting from $p$ with initial velocity $K_-(p)$ and the set $\mathcal{C} = \bigcup_{p\in S} \mathrm{Im}\,\gamma_p$. Then a neighborhood of $S$ in $\mathcal{C}$ is a null hypersurface that we denote with $\mathcal{H}$, and $K_-$ is the restriction to $S$ of its null normal vector field. Up to substituting $K_-$ with a suitable reparametrization $\Tilde{K}_-,$ we have that the map $$T_t : S\rightarrow \mathcal{H},\; p\mapsto \exp_p(t\Tilde{K}_-(p))$$ is a diffeomorphism onto its image for $t$ sufficiently small.

    \begin{definition}[Synthetically inner future converging surface, \cite{ket}, Definition 4.4]
    \label{synth_trapp}
    Using the notation introduced above, we say that $S$ is \textit{synthetically inner future converging}, or \textit{synthetically future converging in} $\mathcal{H}$, if for every $p \in S$ there exists $\epl(p)\in (0,1)$ such that

       $$ \restr{\frac{\mathrm{d}}{\underline{\mathrm{d}t}}}{t=0} \log m_\mathcal{H}(T_t(B_\delta(p))) \le \epl(p) <0, $$
    for any geodesic ball $B_\delta(p),$ where $\frac{\mathrm{d}}{\underline{\mathrm{d}t}}f(t) :=\lim\inf_{h\downarrow 0}\frac{f(t+h)-f(t)}{h}.$
    \end{definition}
The equivalence between this definition and the definition of Bakry--Emery inner future converging surface was also proved by Ketterer in \cite{ket}, Lemma 4.6.
\begin{theorem}
    Let $(M, g, e^{-V}\mathrm{Vol}_M)$ be a weighted spacetime and let $\mathcal{H}$ be a null hypersurface. Let $\Sigma$ be an acausal spacelike hypersurface and $S := \Sigma\,\cap\,\mathcal{H}.$ Then $S$ is synthetically future converging in $\mathcal{H}$ if and only if for any $p\in S$ it holds
    \[\langle H(p), K_-(p)\rangle + \langle \nabla V(p), K_-(p)\rangle >0,\]
    where $K_-$ is the future directed null normal of $\mathcal{H}.$
\end{theorem}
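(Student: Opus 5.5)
The plan is to compute the first variation of the weighted area $m_\mathcal{H}(T_t(B))$ for small subsets $B\subseteq S$ and to relate it pointwise to the weighted null expansion $\langle H,K_-\rangle+\langle\nabla V,K_-\rangle$. Fix $p\in S$ and work in the null hypersurface $\mathcal{H}$ generated by $K_-$ near $S$. For $t$ small, $T_t\colon S\to T_t(S)$ is a $C^1$ diffeomorphism (Remark \ref{diffeo_transp}). By the change of variables underlying Lemma \ref{area},
\[
m_\mathcal{H}(T_t(B))=\int_B e^{-V(T_t(x))}\,J_t(x)\,\mathrm{dVol}_S(x),
\]
where $J_t=\det DT_t$ is the Jacobian of $T_t$ with respect to the induced Riemannian volumes on $S$ and $T_t(S)$. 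The degenerate form $\mathrm{Vol}_\mathcal{H}$ restricted to a spacelike cross-section is the induced volume, so this formula is legitimate.

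The first key step is the derivative of $J_t$ at $t=0$. I would use the standard first variation of area for the variation field $\tilde K_-=fK_-$, where $f>0$ is the $C^1$ reparametrization factor. This field is normal to $S$, so the tangential divergence contributes nothing and
\[
\partial_t J_t(x)\big|_{t=0}=-f(x)\langle H(x),K_-(x)\rangle ,
\]
with the paper's sign convention for the mean curvature vector. Differentiating the weight gives
\[
\partial_t e^{-V(T_t x)}\big|_{t=0}=-e^{-V(x)}f(x)\langle\nabla V(x),K_-(x)\rangle .
\]
Since $S$ and $\Sigma$ are $C^2$ and $K_-$ is $C^1$, both derivatives are continuous in $x$ and uniform for $x$ near $p$. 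Hence $t\mapsto m_\mathcal{H}(T_t(B))$ is differentiable at $0$, the liminf in Definition \ref{synth_trapp} is an actual derivative, and
\[
\frac{\mathrm{d}}{\mathrm{d}t}\Big|_{t=0}\log m_\mathcal{H}(T_t(B_\delta(p)))
=-\frac{\int_{B_\delta(p)} f\,\theta_V\,e^{-V}\,\mathrm{dVol}_S}{\int_{B_\delta(p)} e^{-V}\,\mathrm{dVol}_S},
\qquad \theta_V:=\langle H,K_-\rangle+\langle\nabla V,K_-\rangle .
\]

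Both directions now follow by averaging. If $\theta_V>0$ on $S$, then continuity of $f\theta_V$ and compactness of $\overline{B_\delta(p)}$ bound the right-hand side above by some negative number. The catch is that the definition asks for a single $\varepsilon(p)$ that works for every $\delta$. For small $\delta$ the average tends to $-f(p)\theta_V(p)<0$. For large $\delta$ one would need $\inf f\theta_V>0$ over whatever balls are allowed. I would handle this by restricting to the balls for which $T_t$ is defined, which is the regime of Ketterer's definition, or, when $S$ is compact, by taking $\varepsilon=-\min_S f\theta_V$. Conversely, if the bound holds for all $\delta$, letting $\delta\to0$ and using Lebesgue differentiation with the continuous integrand gives $-f(p)\theta_V(p)\le\varepsilon(p)<0$. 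Since $f>0$, this yields $\theta_V(p)>0$.

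The main obstacle is justifying the first-variation formula in $C^2$ regularity with a degenerate reference volume. Concretely, one must identify $\mathrm{Vol}_\mathcal{H}$ on cross-sections with the induced volume and control the sign conventions for $H$ and for the reparametrized field $\tilde K_-$. The differentiation under the integral sign, by contrast, is routine thanks to uniform $C^1$ dependence on $(t,x)$.
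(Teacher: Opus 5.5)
The paper does not prove this statement. It quotes it from Ketterer (\cite{ket}, Lemma 4.6), so there is no in-paper argument to compare against.

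Your argument is correct and is the natural one. The step you flag as the main obstacle is unproblematic. $K_-$ spans the radical of $g$ on $T\mathcal{H}$, so any complement of $\R K_-$ in $T_q\mathcal{H}$, in particular $T_qS_t$, projects isometrically onto $T_q\mathcal{H}/\R K_-$. Hence $\mathrm{Vol}_\mathcal{H}$ restricts to the induced area on every cross-section. For a $C^2$ metric the geodesic flow is $C^1$, so $(t,x)\mapsto T_t(x)$ is $C^1$ with $\partial_x\partial_t T$ continuous. That is all the first variation identity $\partial_t(e^{-V\circ T_t}J_t)|_{t=0}=-f\,e^{-V}\theta_V$ and the differentiation under the integral sign need.

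Your ``only if'' half is exactly the $\delta\downarrow 0$ localization performed in the proof of Proposition \ref{focusing}. You additionally identify the limit as $-f(p)\theta_V(p)$, and your uniform $C^1$ control makes the appeal to Fatou's lemma there unnecessary.

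Your caveat about uniformity in $\delta$ exposes a genuine defect of the statement as transcribed, not of your proof: the definition asks for one $\varepsilon(p)$ for every ball, and $S$ is not assumed compact. However, your first proposed fix, restricting to balls on which $T_t$ is defined, does not repair it. Here is a counterexample:
\begin{itemize}
\item Work in $\R^{1,2}$ with $V=0$. Take $\Sigma=\{x^0=0\}$, $S=\{x^2=\sqrt{1+(x^1)^2}\}\subset\Sigma$, and $K_-=\partial_0+N_-$ with $N_-$ pointing to the convex side; use $\tilde K_-=K_-$, so $f\equiv 1$.
\item Then $\langle H,K_-\rangle=\kappa>0$, the curvature of the branch.
\item $T_t$ is a diffeomorphism onto its image on all of $S$ for every $t<1$.
\item Yet $\int_S\kappa\,\mathrm{d}s=\pi/2$, while $B_\delta(p)$ is an arc of length $2\delta$. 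So your averaged derivative is at least $-\pi/(4\delta)$, and no single $\varepsilon(p)$ serves all balls.
\end{itemize}
The valid repairs are the other two you indicate. One is compactness of $S$, which is the only case the paper needs since enclosing surfaces are compact; then $\min_S f\theta_V>0$ serves as the uniform constant. The other is to read Definition \ref{synth_trapp} as a condition on $\delta\in(0,\delta_0(p))$, where continuity of $f\theta_V$ at $p$ suffices.
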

    Now we can easily give the definition of synthetically asymptotically regular hypersurface.
\begin{definition}[Synthetically asymptotically regular hypersurface]
\label{synth_asymp_reg}
    We say that a spacelike, connected partial Cauchy surface $\Sigma \subseteq M$ of regularity $C^2$ is a \textit{synthetically asymptotically regular hypersurface} if it contains a compact, connected, codimension $2$ submanifold $S$ such that:
    \begin{itemize}
        \item the complement $\Sigma\setminus S$ is the union of two disjoint submanifolds $\Sigma_+,\Sigma_-$ such that $\overline{\Sigma}_+$ is non-compact and $\Sigma_-$ is connected;
        \item the map $h_\# : \pi_1(S)\rightarrow \pi_1(\overline{\Sigma}_+)$, induced by the inclusion $h: S\rightarrow \overline{\Sigma}_+$, is surjective;
        \item $S$ is synthetically inner future converging.
    \end{itemize}
\end{definition}
As before we call $S$ an enclosing surface for $\Sigma$ and we define in the same way a piercing for $\Sigma.$
    \begin{remark}
        An enclosing surface, being contained in a $C^2$ spacelike partial Cauchy surface, is also an acausal set. This allows us to apply the synthetic $N$-null energy condition on probability measures with support in $S.$ 
    \end{remark}
Note that this initial condition doesn't impose only constraints on the geometry of $S, \Sigma $ and $M$, but also on their topology, thus exhibiting a strong connection between the topology of spacetime and null geodesic incompleteness.

We can now state both the classical and the synthetic Gannon--Lee theorem. We state the synthetic version in the setting of a weighted spacetime. Note that as proved by Ketterer in \cite{ket}, in a spacetime with a $C^2$ metric all the synthetic conditions we assume are equivalent to the classical ones, so the proof we present gives also a generalization of the classical Gannon--Lee Theorem to weighted spacetimes. 
\begin{theorem}[Classical Gannon--Lee Theorem, \cite{schinnerl_steinbauer}]
\label{class_gan}
    Let $(M,g)$ be a past reflecting, null geodesically complete spacetime which satisfies the null energy condition. Let $\Sigma$ be an asymptotically regular hypersurface admitting a piercing, with enclosing surface $S.$ Suppose that also one of the two following possibilities holds:
    \begin{itemize}
        \item any covering spacetime of $(M,g)$ is past reflecting, or
        \item S is simply connected and the universal covering spacetime of $(M,g)$ is past-reflecting.
    \end{itemize}
    Then the map $i_\#:\pi_1(S)\rightarrow \pi_1(\Sigma)$, induced by the inclusion $ i:S\rightarrow \Sigma,$ is surjective.
\end{theorem}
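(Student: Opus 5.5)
The statement immediately above is the classical Gannon--Lee theorem as proved in \cite{ces_ming} and \cite{schinnerl_steinbauer}, so the plan is to reproduce that argument in three steps.

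\textbf{Step 1: focusing.} Consider the inner null congruence emanating from $S$, namely the null geodesics $\gamma_p$ with $\gamma_p'(0)=K_-(p)$. By Definition \ref{bak_em_trap} (in the unweighted case), the expansion satisfies $\theta_-(p)=-\langle H(p),K_-(p)\rangle<0$. By compactness of $S$ there is a uniform bound $\theta_-\le -c<0$. The Raychaudhuri equation together with the NEC gives
\[
\theta' \le -\frac{\theta^2}{n-1}
\]
along each generator. Hence every $\gamma_p$ reaches a focal point of $S$ at affine parameter at most $(n-1)/c$. This is guaranteed because $M$ is null geodesically complete.

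\textbf{Step 2: compactness of $\overline{\Sigma}_-$.} Focusing implies that $\partial J^+(\overline{\Sigma}_-)\setminus\overline{\Sigma}_-$ is generated by the compact family of segments $\gamma_p|_{[0,(n-1)/c]}$. The piercing retraction $\rho_X$ sends this set continuously into $\Sigma$. Here past reflectivity replaces global hyperbolicity: it is used to show that the achronal boundary $\partial I^+(\overline\Sigma_-)$ has no edge outside $S$ and is closed, so by Proposition \ref{achronal_edge} it is a topological hypersurface. Restricting $\rho_X$ to this hypersurface, invariance of domain shows its image is open and closed in $\overline{\Sigma}_-$. Since $\Sigma_-$ is connected, the image is all of $\overline{\Sigma}_-$, which is therefore the continuous image of a compact set. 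I expect this step to be the main obstacle: controlling the edge and the closedness of the achronal boundary without global hyperbolicity, and checking that the null generators leaving $S$ in the $K_+$ direction do not contribute.

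\textbf{Step 3: topology.} Suppose $i_\#$ is not surjective and pass to the covering $\tilde M\to M$ associated with the subgroup $i_\#\pi_1(S)$, or to the universal covering in the simply connected case. Surjectivity of $h_\#$ and van Kampen-type arguments then show that the lift $\tilde\Sigma_-$ of $\Sigma_-$ adjacent to a lift $\tilde S$ contains copies of $\Sigma_+$ (equivalently, $\tilde S$ fails to separate as required). This makes the corresponding $\overline{\tilde\Sigma}_-$ noncompact, because $\overline{\Sigma}_+$ is noncompact. However, all hypotheses lift to $\tilde M$: the NEC is local, past reflectivity is assumed for the relevant covering, the piercing lifts, and null completeness lifts. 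Applying Step 2 in $\tilde M$ therefore gives compactness, a contradiction.
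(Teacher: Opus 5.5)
\textbf{Gap in Step 2.} Your Steps 1 and 3 follow the paper's route: focusing of the $K_-$ congruence, then the covering $\Phi$ with $\Phi_\#\pi_1(\tilde M)=j_\#\pi_1(S)$, where two disjoint lifted copies of $\overline\Sigma_+$ contradict compactness of the lifted $\overline\Sigma_-$. Step 2, however, has a genuine gap.

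First, you use the wrong achronal boundary. Near $S$, the set $\partial J^+(\overline\Sigma_-)\setminus\overline\Sigma_-$ is ruled by the \emph{outgoing} null geodesics with initial velocity $K_+$, not by your $\gamma_p$. Take Minkowski space with $\Sigma=\{t=0\}$ and $\overline\Sigma_-$ the closed unit ball; this satisfies all the hypotheses. There your set is the outgoing cone $\{t=|x|-1,\ |x|>1\}$. It is non-compact, and $\rho_X$ maps it onto $\Sigma_+$, not onto $\overline\Sigma_-$. The set that is ruled by the inward geodesics and lies over $\overline\Sigma_-$ is the future boundary of the \emph{exterior}, $T=\partial I^+(\Sigma_+)\setminus\Sigma_+$. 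In the example this is the inward cone $\{t=1-|x|,\ |x|\le 1\}$, and it is the set used in Proposition \ref{compactness}.

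Second, even with $T$ in place of your set, focusing does not by itself give that $T$ lies in the compact set $\mathcal{H}^+$ swept out by $\exp_p(tK_-(p))$ for $p\in S$, $t\in[0,(n-1)/c]$. That containment is exactly what needs proof. Without global hyperbolicity, a null generator of $\partial I^+(\Sigma_+)$ may be past-inextendible and never reach $S$.

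The paper argues by contradiction. Take $q\in T\setminus\mathcal{H}^+$.
\begin{enumerate}
\item Using the piercing, one produces $q_k\in I^+(S)$ with $q_k\to q$.
\item One takes a limit curve of timelike curves from $S$ to $q_k$.
\item If the parameters stay bounded, Theorem \ref{pert_timelike} makes the limit a focal-point-free null geodesic normal to $S$. This is impossible in the $K_-$ direction because $q\notin\mathcal{H}^+$, and impossible in the $K_+$ direction because the geodesic would enter $I^+(\Sigma_+)$.
\item If the parameters are unbounded, the limit ray must leave the boundary; this is where focusing and null completeness enter. Past reflectivity, applied to $q\in\overline{I^+(r)}$ for a point $r$ slightly to the past of the ray, then forces $q\in I^+(S)\subseteq I^+(\Sigma_+)$, a contradiction.
\end{enumerate}

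This also shows that your account of where past reflectivity enters is misplaced. Any $\partial I^+(A)$ is automatically a closed, achronal, edgeless topological hypersurface, and $\mathrm{edge}(T)\subseteq S$ is a purely local fact. Past reflectivity, the piercing and null completeness are needed precisely for the containment $T\subseteq\mathcal{H}^+$, which your proposal asserts without argument.
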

\begin{theorem}[Synthetic Gannon--Lee Theorem]
\label{synth_gan}
    Let $(M,g, e^{-V}\mathrm{Vol}_M)$ be a past reflecting, null geodesically complete weighted spacetime of dimension $n+1$ which satisfies the synthetic $N$-null energy condition for $N>n-1$. Let $\Sigma$ be a synthetically asymptotically regular hypersurface admitting a piercing, with enclosing surface $S.$ Suppose that also one of the two following possibilities holds:
    \begin{itemize}
        \item any covering spacetime of $(M,\,g,\, e^{-V}\mathrm{Vol}_M)$ is past reflecting, or
        \item S is simply connected and the universal covering spacetime of $(M,g, e^{-V}\mathrm{Vol}_M)$ is past-reflecting.
    \end{itemize}
    Then the map $i_\#:\pi_1(S)\rightarrow \pi_1(\Sigma)$, induced by the inclusion $ i:S\rightarrow \Sigma,$ is surjective.
\end{theorem}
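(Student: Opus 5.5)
We follow the strategy of Costa e Silva--Minguzzi and Schinnerl--Steinbauer, changing only the step where curvature enters. The core claim is that $\overline{\Sigma}_-$ is compact; everything else is causal-theoretic or topological.

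\emph{Step 1 (synthetic focusing).} Take $\mathcal{H}$ to be the null hypersurface generated by the null geodesics $\gamma_p$ with initial velocity $K_-(p)$, $p\in S$. Fix $p\in S$ and a small ball $B_\delta(p)\subseteq S$, and let $\mu_0$ be the normalized restriction of $m_\mathcal{H}$ to $B_\delta(p)$. Push it forward by $T_t$ to get $\mu_t$, as long as $T_t$ remains a diffeomorphism onto an acausal spacelike cross-section. By Lemma~\ref{area}, the density satisfies $\rho_t(T_t x)^{-1}\propto m_\mathcal{H}$-Jacobian of $T_t$. Lemma~\ref{concavity} then makes $t\mapsto \rho_t^{-1/N}(\tilde\gamma_t)$ concave along $\Pi$-a.e.\ generator, so the Jacobian $J_t^{1/N}$ is concave in $t$. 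Integrating over the ball gives concavity of $t\mapsto m_\mathcal{H}(T_t(B_\delta(p)))^{1/N}$ in the mean sense. The trappedness condition of Definition~\ref{synth_trapp} gives a strictly negative initial lower derivative of $\log m_\mathcal{H}(T_t(B_\delta))$, bounded away from $0$ uniformly in $\delta$.

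Concavity then forces $J_t^{1/N}$, at least along generators near $p$, to hit zero before an affine parameter $t_0(p)\le N/|\epl(p)|$. Since $S$ is compact and $\epl$ can be chosen locally uniform, $t_0$ is uniformly bounded. Null completeness guarantees that these generators exist up to that time. A zero of the Jacobian is a focal point of $S$. Past it, the generator leaves $\partial J^+(\overline{\Sigma}_-)$ (equivalently, it enters $I^+(S)$), by the standard variational fact that null geodesics through focal points are not achronal. That fact is Lorentzian-geometric and needs no curvature.

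\emph{Main obstacle.} The delicate point is that the synthetic NEC only applies while $T_t$ is a diffeomorphism onto an acausal cross-section, whereas we want to conclude precisely that this fails in finite time. The plan is a contradiction argument. Suppose some generator stays achronal (no focal point) on $[0,t_1]$ with $t_1>t_0$. Then on a neighborhood of it, $T_t$ is a diffeomorphism for all $t\le t_1$, and the cross-sections are acausal because they lie on the achronal boundary. Restricting the measures to such neighborhoods, the concavity estimate applies and forces the Jacobian to vanish before $t_0$, a contradiction. One also has to pass from the ball averages in Definition~\ref{synth_trapp} to pointwise statements along generators, by letting $\delta\to0$ and using continuity of the $C^1$ Jacobian.

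\emph{Step 2 (compactness of $\overline{\Sigma}_-$).} Consider $E=\partial J^+(\overline{\Sigma}_-)$ near its inner part. By Step~1, its generators emanating from $S$ toward $\Sigma_-$ are contained in the compact set $T_{[0,t_0]}(S)$. Past reflectivity, together with the edge characterization Proposition~\ref{achronal_edge}, makes $E$ an achronal topological hypersurface. Projecting by the piercing retraction $\rho_X$ then shows, as in the classical proof, that $\overline{\Sigma}_-$ is the continuous image of a compact set, hence compact.

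\emph{Step 3 (topology).} Pass to the covering $\tilde M$ corresponding to $i_\#(\pi_1(S))\le\pi_1(\Sigma)$, or to the universal cover in the simply connected case. The lifted data stay synthetically asymptotically regular, with the lifted piercing and past reflectivity assumed. The synthetic NEC also lifts, since it is local and the local-to-global result of Section~\ref{loc_glob} recovers the global condition on $\tilde M$. Applying Step~2 in $\tilde M$, the preimage of $\overline{\Sigma}_-$ is compact. The surjectivity of $h_\#$ and the resulting count of sheets show that the covering is one-sheeted, so $i_\#$ is surjective.
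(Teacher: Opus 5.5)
Your Step 1 is essentially the paper's focusing argument: concavity of $t\mapsto(\rho_t\circ T_t)^{-1/N}$ combined with the trappedness bound gives a uniform $t_0$. The genuine gap is in Step 2, where the causal-theoretic work of the theorem lies.

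First, $\partial J^+(\overline{\Sigma}_-)$ is the wrong achronal boundary. Its null generators leave $S$ in the direction $K_+$, toward $\Sigma_+$, and inner trappedness does not control these. Take Minkowski space with $\Sigma=\{t=0\}$ and $S$ a round sphere. Then $\partial J^+(\overline{\Sigma}_-)$ is $\overline{\Sigma}_-$ together with the outgoing light cone, which is non-compact and projects under $\rho_X$ onto all of $\Sigma$. The relevant set is $T=\partial I^+(\Sigma_+)\setminus\Sigma_+$, whose generators from $S$ are the $K_-$-geodesics.

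Second, and more seriously, knowing that the generators \emph{emanating from} $S$ lie in $\{\exp_x(t\tilde K_-(x)) : x\in S,\ t\in[0,t_0]\}$ does not show that all of $T$ does. In a spacetime that is only past reflecting, $T$ could a priori contain null generators that never reach $S$ in the past. Ruling these out is exactly where past reflectivity and null completeness are needed. You instead assign past reflectivity the job of making $E$ achronal and a topological hypersurface, which is not its role. Achronality of $\partial I^+(\Sigma_+)$ holds in every spacetime, and $\mathrm{edge}(T)\subseteq S$ follows from a local lemma of Galloway.

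The paper's argument runs as follows. Take $q\in T$ off that compact set. The piercing and the connectedness of $\rho_X(I^-(q_k))$ give $q_k\to q$ with $q_k\in I^+(S)$. Timelike curves $\sigma_k$ from $S$ with $\sigma_k(t_k)=q_k$, in $h$-arclength, have a limit causal curve $\sigma$.
\begin{itemize}
\item If $(t_k)$ is bounded, $\sigma$ is a focal-point-free null geodesic normal to $S$ ending at $q$. A $K_-$ one is excluded since $q$ is off the set, and a $K_+$ one enters $I^+(\Sigma_+)$ before reaching $q$.
\item If $t_k\to\infty$, focusing and completeness force $\sigma$ into the open set $I^+(\Sigma_+)$. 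Hence $q\in\overline{I^+(r)}$ for some $r$ there. Past reflectivity gives $r\in\overline{I^-(q)}$, so $q\in I^+(\Sigma_+)$, contradicting $q\in T$.
\end{itemize}
Only after this does the invariance-of-domain argument give $\rho_X(T)=\overline{\Sigma}_-$.

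Step 3 does not close as stated either. If $\Phi$ has $k\ge 2$ sheets over $\Sigma$, then $\Phi^{-1}(S)$ is disconnected and cannot serve as an enclosing surface. In any case, compactness of $\Phi^{-1}(\overline{\Sigma}_-)$ would only give $k<\infty$, not $k=1$.

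The missing idea is to lift a single copy of $S$. Let $W$ be the union of $\overline{\Sigma}_+$ and a one-sided collar of $S$ in $\Sigma_-$. Surjectivity of $h_\#$ together with $\Phi_\#\pi_1(\tilde M)=j_\#\pi_1(S)$ makes each component of $\Phi^{-1}(W)$ map diffeomorphically onto $W$. So one lift $\tilde S_1$ separates the connected $\tilde\Sigma=\Phi^{-1}(\Sigma)$ and bounds a copy $\tilde C_1$ of $\overline{\Sigma}_+$.

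Now take $\tilde\Sigma\setminus\tilde C_1$ as the inner region and apply the compactness result in $\tilde M$. The synthetic NEC holds there by the local-to-global theorem, and past reflectivity holds by hypothesis. This makes $(\tilde\Sigma\setminus\tilde C_1)\cup\tilde S_1$ compact, which is impossible if a second closed, non-compact copy $\tilde C_2$ exists. Hence $\Phi|_{\tilde\Sigma}$ is trivial, so $m_\#(\pi_1(\Sigma))\subseteq j_\#(\pi_1(S))$. Injectivity of $m_\#$, coming from $M\cong\Sigma\times\mathbb{R}$, then gives surjectivity of $i_\#$.
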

Being globally hyperbolic is a stronger condition than being past reflecting (see Minguzzi \cite{ming_2}, Section 4.1 and 4.5) and it also passes to coverings. Thus, assuming global hyperbolicity of $M$ we get a simpler version of the statement, although slightly weaker. We state it below for completeness, however since the proof is basically the same with only some simplification in the causality step, we will present here only the general case.
\begin{theorem}[Globally hyperbolic Gannon--Lee Theorem]
     Let $(M,g, e^{-V}\mathrm{Vol}_M)$ be a globally hyperbolic, null geodesically complete weighted spacetime of dimension $n+1$ which satisfies the synthetic $N$-null energy condition for $N> n-1$. Let $\Sigma$ be a synthetically asymptotically regular Cauchy surface, with enclosing surface $S.$
     Then the map $i_\#:\pi_1(S)\rightarrow \pi_1(\Sigma)$, induced by the inclusion $ i:S\rightarrow \Sigma,$ is surjective.
\end{theorem}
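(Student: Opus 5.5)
The plan is to deduce the globally hyperbolic statement from Theorem \ref{synth_gan}, since global hyperbolicity supplies all the causal hypotheses there. The argument has three steps, all essentially formal, with the substance in the second.

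\textbf{Step 1: causal hypotheses on $M$ itself.} Global hyperbolicity implies past reflectivity (Minguzzi \cite{ming_2}, Sections 4.1 and 4.5). Since $\Sigma$ is a Cauchy surface, it is in particular a connected, achronal, spacelike $C^2$ hypersurface, hence a partial Cauchy surface. Thus it is a synthetically asymptotically regular hypersurface in the sense of Definition \ref{synth_asymp_reg}, with enclosing surface $S$.

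\textbf{Step 2: the piercing.} This is the one point needing care. Because $\Sigma$ is a smooth spacelike Cauchy surface, I would take a complete future-directed timelike vector field $X$ on $M$, for instance a suitable rescaling of the gradient of a Cauchy temporal function adapted to $\Sigma$ (Bernal--S\'anchez), or simply any timelike vector field normalized so that its integral curves are inextendible.

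Every maximal integral curve of such an $X$ is an inextendible timelike curve. By the Cauchy property, it therefore meets $\Sigma$ exactly once, so $X$ is a piercing.

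If one wishes to avoid splitting theorems, it suffices to take any smooth timelike $X$ and then multiply it by a positive function that makes it complete with respect to an auxiliary Riemannian metric. Completeness guarantees that maximal integral curves are inextendible, and achronality together with the Cauchy property does the rest.

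\textbf{Step 3: coverings.} Let $\tilde M\to M$ be any covering spacetime, with lifted metric and weight $V\circ p$. The preimage $\tilde\Sigma$ of $\Sigma$ is a Cauchy surface of $\tilde M$. To see this, note that an inextendible timelike curve in $\tilde M$ projects to an inextendible timelike curve in $M$, which meets $\Sigma$ exactly once, so the lift meets $\tilde\Sigma$ at most once. It meets $\tilde\Sigma$ at least once because the projected intersection lifts along the curve.

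Hence $\tilde M$ is globally hyperbolic and therefore past reflecting, so the first alternative in Theorem \ref{synth_gan} holds.

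\textbf{Conclusion.} The remaining hypotheses, namely null completeness and the synthetic $N$-NEC with $N>n-1$, are assumed verbatim. Theorem \ref{synth_gan} then yields surjectivity of $i_\#:\pi_1(S)\to\pi_1(\Sigma)$.

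The main obstacle is producing the piercing cleanly, specifically ensuring that integral curves are inextendible. I would handle this by the completeness rescaling described in Step 2.
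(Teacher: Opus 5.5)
Your proposal is correct and follows the paper's route: the paper also obtains this statement as a special case of Theorem \ref{synth_gan}, noting that global hyperbolicity implies past reflectivity and passes to coverings, and your Steps 2 and 3 just make explicit the piercing and the lifting of the Cauchy surface that the paper leaves implicit. The completeness rescaling in Step 2 is harmless but not needed, since maximal integral curves of any nowhere-vanishing vector field are already inextendible (by the escape lemma together with a flow-box argument).
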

In all the versions above we have stated the theorem directly, but we can easily obtain a result more similar to the pattern \corr{incompleteness} theorem discussed above by taking the converse. Concretely, this means removing the null geodesic completeness assumption and adding the hypothesis that $i_\# : \pi_1(S)\rightarrow\pi_1(\Sigma)$ is not surjective. This last assumption is an addition to the initial condition and can be interpreted as the assumption that topological complexities of $\Sigma$, in the form of non-trivial elements of its fundamental group, are not all coming from $S$. 

\section{A \corr{local to global} property for the synthetic NEC}
\label{loc_glob}
Before moving on to the proof of the synthetic Gannon--Lee theorem we prove a local to global result for Ketterer's synthetic NEC, that we will use in the final part of the proof of the theorem. In the smooth setting, which we are working in, this is a consequence of the equivalence with the classical NEC, since the Ricci tensor is a local object. However, our goal in this paper is to give a proof of the synthetic Gannon--Lee theorem that avoids any explicit mention of curvature and thus we will prove this result independently. 
This approach has also another advantage, since it could be adapted more easily to extend the validity of the result if the synthetic NEC were to be extended to a lower regularity setting.

\begin{definition}
    We say that the synthetic $N$-null energy condition for $N>n-1$ holds \emph{locally} in a weighted spacetime $(M, g, e^{-V}\mathrm{Vol}_M)$ of dimension $n+1$ if  
    for any null hypersurface $\mathcal{H}$, any point $p\in\mathcal{H}$ has an open neighbourhood $U$, with respect to the topology of $\mathcal{H}$, such that for any two acausal, null-connected, absolutely continuous probability measures $\mu_0,\mu_1\in \mathcal{P}(M,m_\mathcal{H})$ that are supported in $U$, the induced null displacement interpolation $(\mu_t)_{t\in[0,1]}$ satisfies
    \[ S_{N'}(\mu_t\,\vert\,m_\mathcal{H}) \le (1-t) S_{N'}(\mu_0\,\vert\, m_\mathcal{H}) + t S_{N'}(\mu_1\,\vert\, m_\mathcal{H})\]
     for all $N'\ge N$.
\end{definition}
We now prove that this is actually equivalent to the global synthetic NEC. There have been many works where local to global results for various synthetic curvature conditions were proved in the Riemannian setting, such as Sturm \cite{sturm_1}, Villani \cite{vill}, Bacher--Sturm \cite{BACHER201028},  Cavalletti--Milman \cite{Cavalletti2016TheGT}, Ketterer \cite{ket_local_to_global}, \corr{and their Lorentzian adaptation of Braun \cite{braun};} our result is very much inspired by these works. For the structure of the proof, we refer in particular to Ketterer \cite{ket_local_to_global}.
\begin{theorem}
\label{local_to_global}
    The synthetic $N$-null energy condition for a weighted spacetime \\ $(M, g, e^{-V}\mathrm{Vol}_M)$ holds if and only if it holds locally.
    \begin{proof}
        We start by proving that the global synthetic NEC holds for compactly supported probability measures and then we extend the result to the general case. Consider a null hypersurface $\mathcal{H}$ and two acausal, null connected probability measures $\mu_0,\mu_1\in\mathcal{P}(M,m_\mathcal{H})$ that are compactly supported. By Lemma \ref{concavity}, it suffices to show that for $\Pi-$a.e. $\gamma \in \mathcal{G}(\mathcal{H})$ it holds
         \[\rho_t(\gamma(t))^{-\frac{1}{N}} \ge (1-t) \rho_0(\gamma(0))^{-\frac{1}{N}} + t\rho_1(\gamma(1))^{-\frac{1}{N}},\]
         where $\Pi$ is the dynamical null coupling connectin $\mu_0$ and $\mu_1$ and the fuctions $\rho_t$'s are the densities of the induced null displacement interpolation $(\mu_t)_{t\in[0,1]}.$
        Since the null displacement interpolation is induced by the family of $C^1$ diffeomorphisms $T_t$, the set
        \[X=\bigcup_{t\in[0,1]} \mathrm{supp}\,\mu_t\]
        is compact. By Theorem 1 of Nomizu and Ozeki's paper \cite{compl_riem}, we can fix a complete background Riemannian metric $h$ on $M$. Using compactness, we can choose $\lambda \in (0,\mathrm{diam}\, X)$ and two finite collections of open subsets that cover $X$,
        \[ \mathcal{U}_1 = \{U_1,\dots, U_k\},\qquad \mathcal{U}_2 = \{V_1,\dots, V_k\},\]
        such that $B_\lambda(V_i)\subseteq U_i$ and the synthetic NEC holds for measures supported on $U_i$ for all $i = 1,\dots, k.$ 
        We refine the cover $\mathcal{U}_2$ by defining iteratively
        \[ L_1 = U_1, \qquad L_{i+1} = U_{i+1} \setminus \bigcup_{j = 1}^i \,(\,U_{i+1} \,\cap\, L_j\,).\]
        This way, we obtain a cover (not necessarily open) of $X$ with the disjoint, non-zero measure sets $L_1,\dots, L_k$. 
        \vskip 5pt
        Now, we can consider the maximal value of the $h$-length of the velocity of the null geodesics generator of $\mathcal{H}$ in $X$, which is a finite number thanks to compactness. Denote it with $m$ and take $\bar{t} < \bar{s}$ in $[0,1]$ such that $\bar{s}-\bar{t} \le \frac{\lambda}{m}.$ Define $\nu_0 = \mu_{\bar{t}}$ and $\nu_1 = \mu_{\bar{s}}$; these are acausal probability measures that are null connected by the family of transport maps $\widetilde{T}_\tau := T_{(1-\tau)\bar{t} + \tau \bar{s}}\circ  T_{\bar{t}}^{-1},$ for $\tau\in [0,1].$  We define 
        \[\nu_0^i := \frac{1}{\nu_0(L_i)} \restr{\nu_0}{L_i},\]
        so that 
        \[\nu_0 = \sum_{i=1}^k \nu_0(L_i) \nu_0^i.\] 
        If $\mathcal{L}_i = \{ \gamma \in \mathcal{G}(\mathcal{H})\,:\, \gamma(0)\in L_i\}$, the dynamical null couplings 
        \[\Pi^i = \frac{1}{\Pi(\mathcal{L}_i)} \restr{\Pi}{\mathcal{L}_i}\]
        induce for all $i = 1,\dots, k$ a null displacement interpolation $(\nu_\tau^i)_{\tau\in[0,1]}$ with transport maps given by $\restr{\widetilde{T}_\tau}{L_i}$. Using the fact that $\widetilde{T}_\tau$ is a diffeomorphism from the support of $\nu_0$ to the support of $\nu_\tau$ it is easy to check that
        \[\nu_\tau^i = \frac{1}{\nu_0(L_i)} \restr{\nu_\tau}{\widetilde{T}_\tau(L_i)},\]
        and therefore
        \[\nu_\tau = \sum_{i=1}^k \nu_0(L_i) \nu_\tau^i, \;\; \mathrm{and}\;\; \eta_\tau = \sum_{i=1}^k \nu_0(L_i) \eta_\tau^i,\]
        where $\eta_\tau$ is the density of $\nu_\tau$ and $\eta_\tau^i$ is the density of $\nu_\tau^i.$
        The support of $\nu_1^i$ is contained in $\widetilde{T}_1(L_i)$ and therefore is obtained from $L_i$ by moving across the segments of the null geodesic generators of $\mathcal{H}$ that are supported on $[\bar{t}, \bar{s}]$. Since by hypotheses $\bar{s} - \bar{t} \le \frac{\lambda}{m}$, these segments are shorter (with respect to $h$) than $\lambda$ and thus supp$\,\nu_1^i\subseteq B_\lambda(L_i)\subseteq U_i$. By Lemma \ref{concavity}, the local synthetic NEC then implies that for $\Pi^i-$a.e. $\gamma\in \mathcal{G}(\mathcal{H})$ it holds 
        \[\eta_\tau^i(\gamma(\tau))^{-\frac{1}{N}} \ge (1-\tau) \eta_0^i(\gamma(0))^{-\frac{1}{N}} + \tau\eta_1^i(\gamma(1))^{-\frac{1}{N}}.\]
        Since $\eta_\tau = \sum_{i=1}^k \nu_0(L_i)\eta_\tau^i$ and the supports of $\eta_\tau^i$ are pairwise disjoint, it also holds
        \[ \eta_\tau(x)^{-\frac{1}{N}} = \sum_{i=1}^k \nu_0(L_i)^{-\frac{1}{N}}\eta_\tau^i(x)^{-\frac{1}{N}}.\]
        Summing over $i$ and using the previous inequality we obtain for $\Pi -$a.e. $\gamma \in \mathcal{G}(\mathcal{H})$ 
        \[\eta_\tau(\gamma(\tau))^{-\frac{1}{N}} \ge (1-\tau) \sum_{i=1}^k \nu_0(L_i)^{-\frac{1}{N}}\eta_0^i(\gamma(0))^{-\frac{1}{N}} + \tau \sum_{i=1}^k \nu_0(L_i)^{-\frac{1}{N}}\eta_1^i(\gamma(1))^{-\frac{1}{N}}=\]
        \[ = (1-\tau) \eta_0(\gamma(0))^{-\frac{1}{N}} + \tau \eta_1(\gamma(1))^{-\frac{1}{N}}.\]
        This implies concavity of the function $(\bar{t}, \bar{s}) \ni \tau \mapsto \eta_\tau(\gamma(\tau))^{-\frac{1}{N}}$ and hence also $[\eta_\tau(\gamma(\tau))^{-\frac{1}{N}}]'' \le 0$ on $(\bar{t},\bar{s})$ in the distributional sense. We can reiterate the argument for different values of $\bar{t},\bar{s}$ and hence we obtain negativity of the second distributional derivative of $(0,1)\ni t\mapsto \rho_t(\gamma(t))^{-\frac{1}{N}},$ which is once again equivalent to
        \[\rho_t(\gamma(t))^{-\frac{1}{N}} \ge (1-t) \rho_0(\gamma(0))^{-\frac{1}{N}} + t\rho_1(\gamma(1))^{-\frac{1}{N}},\]
        as we wanted.
        \vskip 5pt
        Now we conclude the proof by analyzing the case of general probability measures that aren't necessarily compactly supported. 
        Let $\mu_0, \mu_1 \in \mathcal{P}(M,m_\mathcal{H})$ be acausal and null connected via a family of transport maps $(T_t)_{t\in[0,1]}$, and let $(\mu_t)_{t\in[0,1]}$ be the induced null displacement interpolation. We can consider an exhaustion by compact sets $\{B_n\}_{n\in\N}$ of $M$ and the restrictions 
        \[\mu_0^{(n)} = \frac{1}{\mu_0(B_n)}\restr{\mu_0}{B_n}.\]
        The restrictions of the maps $T_t$ induce for each $n\in\N$ a null displacement interpolation $\left(\mu_t^{(n)}\right)_{t\in[0,1]}.$ As before, using that $T_t$ is a diffeomorphism we obtain that for all $t\in [0,1]$ and for all $n\in \N$ it holds
        \[\mu_t^{(n)} = \frac{1}{\mu_0(B_n)} \restr{\mu_t}{T_t(B_n)}, \qquad \rho_t^{(n)} = \frac{1}{\mu_0(B_n)} \restr{\rho_t^{(n)}}{T_t(B_n)},\]
        where $\rho_t^{(n)}$ is the density of $\mu_t^{(n)}.$
        Now, $\mu_0^{(n)}$ and $\mu_1^{(n)}$ are compactly supported and therefore we have convexity of the $N$-Renyi entropy along the null displacement $(\mu_t^{(n)})$. By Lemma \ref{concavity}, it holds
        \[ \rho_t^{(n)}(\gamma(t))^{-\frac{1}{N}} \ge (1-t) \rho_0^{(n)}(\gamma(0)) + t \rho_1^{(n)}(\gamma(1))\]
        for $\Pi_n$-almost every curve $\gamma$ in $\mathcal{G}(\mathcal{H}),$ where $\Pi_n$ is the dynamical null coupling induced by $(\mu_t^{(n)})_{t\in[0,1]}$. Since $\rho_t^{(n)}$ converges pointwise to $\rho_t$ and the union of the supports of $\Pi_n$ is the support of $\Pi$, we can pass to the limit to obtain that
        \[\rho_t(\gamma(t))^{-\frac{1}{N}} \ge (1-t) \rho_0(\gamma(0))^{-\frac{1}{N}} + t\rho_1(\gamma(1))^{-\frac{1}{N}}\]
        for $\Pi$-almost every $\gamma\in\mathcal{G}(\mathcal{H})$. Once again by Lemma \ref{concavity} the thesis follows.
        \end{proof}        
\end{theorem}
\begin{corollary}[Synthetic NEC passes to coverings]
\label{covering}
If $\Phi : \tilde{M}\rightarrow M$ is a covering of spacetimes and $M$ satisfies the synthetic NEC then $\tilde{M}$ satisfies the synthetic NEC.
\begin{proof}
    This follows immediately from the previous theorem: since $\Phi$ is a local isometry, $\tilde{M}$ satisfies the local synthetic NEC and therefore also the global synthetic NEC.
\end{proof}
\end{corollary}
\section{Proof of the theorem}
We now prove the synthetic Gannon--Lee theorem. We follow the proof of Schinnerl and Steinbauer \cite{schinnerl_steinbauer}, modifying the step in which inner trappedness of $S$ and the NEC come into play. We instead substitute those results with the corresponding ones from Ketterer \cite{ket}. The structure of the proof is as follows: the first step is proving the existence of a focal point along the null geodesics orthogonal to $S$. This is the main point in which we need to use the synthetic condition instead of the classical ones. After this, we can basically follow the proof of Schinnerl and Steinbauer \cite{schinnerl_steinbauer}, with some simplifications due to the fact that we are working assuming higher regularity of the metric tensor. The second step of the proof is showing that the set $\overline{\Sigma}_-:= S\,\cup\,\Sigma_-$ is compact. Then, we can conclude with a purely topological argument that allows to obtain surjectivity of the map $i_\# : \pi_1(S)\rightarrow \pi_1(\Sigma).$
\subsection{Focusing result}
The key result of this step of the proof is proved by Ketterer in Remark 4.10 of \cite{ket}, but we include the proof for completeness, since it is a crucial step in the proof of the synthetic Gannon--Lee theorem and is one of the main points in which our proof differs from the classical one. 
\vskip 5pt Recall that given a compact, acausal, codimension $2$ submanifold $S\subseteq M$ endowed with a $C^1$ null vector field $K_-$ that is normal to $S$, we can consider for any $x\in S$ the geodesic $\gamma_x$ starting from $p$ with initial velocity $K_-(x)$ and we can construct a null hypersurface $\mathcal{H}$ by taking a neighborhood of $S$ in $C = \bigcup_{x\in S} \mathrm{Im}\,\gamma_x.$ By Remark \ref{diffeo_transp} we can choose a reparametrization $\tilde{K}_-$ of $K$ such that for sufficiently small $t$ the map 
\[T_t : S \rightarrow \mathcal{H}, \; x\mapsto \exp_x(t\tilde{K}_-(x))\]
is a diffeomorphism onto its image.
\begin{proposition}[Focusing result] 
\label{focusing}
    Let $(M, g, e^{-V}\mathrm{Vol}_M)$ be a weighted spacetime that satisfies the synthetic $N$-null energy condition and $S$ a compact, acausal, spacelike codimension $2$ submanifold. Consider $\mathcal{H}$, $\Tilde{K}_-$ and $T_t$ as above and suppose that $S$ is synthetically future converging in $\mathcal{H}.$ Then there exists a time $t_0>0$ such that for all $x\in S$ and for all geodesic balls $B_\delta(x),$ the map $\restr{T_t}{B_\delta(x)}$ ceases to be a diffeomorphism for some $t\in (0,t_0).$
\begin{proof}
    Consider $x\in S$ and a geodesic ball $B_\delta(x)$. Let $t_x>0$ such that the map $T_t$ is a diffeomorphism for $t\in (0,t_x).$ Consider the acausal probability measure $\mu_0 := \frac{1}{m_{\mathcal{H}}}m_\mathcal{H} \vert_{B_\delta(x)}$ and the induced null displacement interpolation $\mu_t := (T_t)_\# \mu_0.$ Finally, let $\rho_t$ be the density of $\mu_t$ with respect to $m_\mathcal{H}$. From Lemma \ref{area}, we have
    $$m_\mathcal{H}(T_t(B_\delta(x))) = \int_{B_\delta(x)} \det DT_t(x) e^{-V\circ T_t}\mathrm{dVol}_\mathcal{H} = \int_{B_\delta(x)} \frac{\rho_0}{\rho_t \circ T_t}\mathrm{d}m_\mathcal{H}.$$
    
    
    Since $S$ is future converging in $\mathcal{H}$ we get
    $$0 > -\epl(x) \ge \frac{1}{m_\mathcal{H}(B_\delta(x))}  \restr{\frac{\mathrm{d}}{\underline{\mathrm{d}t}}}{t=0}\int_{B_\delta(x)} \frac{\rho_0}{\rho_t\circ T_t} \mathrm{d}m_\mathcal{H} \ge\frac{1}{m_\mathcal{H}(B_\delta(x))}  \int_{B_\delta(x)} \restr{\frac{\mathrm{d}}{\underline{\mathrm{d}t}}}{t=0}\frac{\rho_0}{\rho_t\circ T_t} \mathrm{d}m_\mathcal{H},$$
    where in the last inequality we have used Fatou's lemma. 
    \vskip 5pt
    Letting $\delta \downarrow 0$ we obtain
    $$-\epl(x) \ge \restr{\frac{\mathrm{d}}{\underline{\mathrm{d}t}}}{t=0}\frac{\rho_0}{\rho_t\circ T_t} = -\restr{\frac{\rho_0}{(\rho_t \circ T_t)^2}}{t=0} (\rho_t \circ T_t)'(0) = -\frac{(\rho_t \circ T_t)'(0) }{\rho_0}.$$
    Now consider the function 
    $$y_x(t) = \left[ \log(\rho_t\circ T_t(x))\right]' = \frac{(\rho_t\circ T_t)'}{\rho_t\circ T_t}.$$
    From the previous computation we have $y_x(0) \ge \epl(x) > 0.$ Moreover,  Lemma \ref{concavity} implies concavity of the function $t\mapsto (\rho_t\circ T_t)^{-\frac{1}{N}}$. Setting for simplicity of notation $u(t) := \rho_t\circ T_t(x)$ we get
    $$\left[\left(\rho_t\circ T_t\right)^{-\frac{1}{N}}\right]''= \frac{N+1}{N^2}u^{-\frac{1}{N}-2}(u')^2-\frac{1}{N}u^{-\frac{1}{N}-1}u'' \le 0,$$
    which is equivalent to 
    $$u''u^{-1} - \frac{N+1}{N} \left(\frac{u'}{u}\right)^2 = \frac{u''u-(u')^2}{u^2} -\frac{1}{N}\left(\frac{u'}{u}\right)^2 = y_x'-\frac{1}{N}y_x^2 \ge0.$$
    Hence, by Riccati comparison, we obtain $t_x \le \frac{N}{\epl(x)}$ and to finish the proof it suffices to take $$t_0 := \frac{N}{\min_{x\in S}\epl(x)},$$ which is well defined by compactness of $S$.
\end{proof}
\end{proposition}
\begin{remark}
    The previous proposition shows that for any $x\in U$ the map $T_t$ is not a local diffeomorphism in $x$ for some $t\in [0, t_0].$ Hence its differential $DT_t(x)$ is singular, i.e., there exists a vector $v\in T_xS$ such that $DT_t(x)v = 0.$ Since the vector field $J(t) = DT_t(x)v$ along the null geodesic $\gamma(t)=\exp_x(t\Tilde{K}_-(x))$ is a Jacobi field, this means that there is a focal point along $\gamma$ at $t.$
\end{remark}

\subsection{Causality step}
The focusing result we have just shown is instrumental in the next step of the proof, which is showing the compactness of $\overline{\Sigma}_- = \Sigma_-\,\cup\, S.$ We adapt the proof found in Costa e Silva--Minguzzi \cite{ces_ming} and Schinnerl--Steinbauer \cite{schinnerl_steinbauer} to our synthetic setting. After the first step, in which we use Proposition \ref{focusing} to replace a focusing argument obtained via the classical NEC, the proof proceed analogously as in \cite{ces_ming} and \cite{schinnerl_steinbauer}, but we will include it anyways for convenience of the reader. We will use the following result on causal curves, which is theorem 10.51 in O'Neill \cite{oneill}.  
\begin{theorem}[Condition for null curves to be perturbed into timelike curves]
\label{pert_timelike}
Let $S$ be a spacelike submanifold of a spacetime $M$ and $\alpha:[0,1]\rightarrow M$ a causal curve with $\alpha(0)\in S$. Then there exists a timelike curve $\beta$ with $\beta(0) \in S$ and $\beta(1) = \alpha(1)$ unless $\alpha$ coincides, up to reparemtrization, with a null geodesic exiting orthogonally from $S$ and without any focal points before $\alpha(1).$
\end{theorem}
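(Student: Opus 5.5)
We argue by contraposition. Suppose there is no timelike curve from $S$ to $q:=\alpha(1)$, i.e.\ $q\in J^+(S)\setminus I^+(S)$. We then show that $\alpha$ is forced to be, up to reparametrization, a null geodesic leaving $S$ orthogonally with no focal point before $q$. This reduces to three separate claims, each proved by building a timelike curve from $S$ to $q$ whenever the claimed property fails. In each case we use the Push-up Lemma to move the chronological relation along the remaining part of $\alpha$.

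\emph{First, $\alpha$ is a null pregeodesic.} Assume not. Then there is $t_0$ and a convex normal neighbourhood $W$ of $\alpha(t_0)$ containing a small segment $\alpha|_{[a,b]}$ such that the segment is not a null pregeodesic. This happens either because $\alpha'$ is timelike somewhere, or because the segment is not a reparametrized geodesic. In a convex neighbourhood, the standard local causality result says that two points joined by a causal curve in $W$ which is not a null pregeodesic satisfy $\alpha(a)\ll\alpha(b)$. We then have
\[
S\ni\alpha(0)\le\alpha(a)\ll\alpha(b)\le q ,
\]
and the Push-up Lemma gives $\alpha(0)\ll q$, a contradiction.

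\emph{Second, $\alpha'(0)\perp T_{\alpha(0)}S$.} Assume not, and pick $v\in T_{\alpha(0)}S$ with $\langle v,\alpha'(0)\rangle\neq 0$. Choose the sign of $v$ so that moving the initial point along a curve $\sigma(s)$ in $S$ with $\sigma'(0)=v$ decreases the energy. Concretely, consider a variation $\alpha_s$ of $\alpha$ with $\alpha_s(0)=\sigma(s)$ and $\alpha_s(1)=q$, whose variation field $V$ satisfies $V(0)=v$ and $V(1)=0$. By the first variation formula for a null geodesic,
\[
\frac{d}{ds}\Big|_{s=0}E(\alpha_s)=-\langle V,\alpha'\rangle\Big|_0^1=\langle v,\alpha'(0)\rangle ,
\]
and this can be made negative. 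We want the curves $\alpha_s$ to be timelike for small $s>0$, not just to have negative energy. For this we add a longitudinal correction $V=(1-t)\,\tilde v+f(t)\alpha'$, with $\tilde v$ the parallel transport of $v$. Choosing $f$ suitably makes $\langle V',\alpha'\rangle$ a negative constant along $\alpha$. Then $\langle\partial_t\alpha_s,\partial_t\alpha_s\rangle=2s\langle V',\alpha'\rangle+O(s^2)<0$ uniformly in $t$, so each $\alpha_s$ is a timelike curve from $S$ to $q$.

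\emph{Third, no focal point before $q$.} Let $\alpha$ be an orthogonal null geodesic with a focal point $\alpha(t_1)$, $t_1<1$. Take the $S$-Jacobi field $J$ on $[0,t_1]$ with $J(t_1)=0$ which is not proportional to $\alpha'$, and extend it by zero. Then perturb it using a bump field near $t_1$, as in the Riemannian index argument. This gives a field $Y$, orthogonal to $\alpha'$ modulo $\alpha'$, with $Y(1)=0$ and $Y(0)\in T_{\alpha(0)}S$, for which the $S$-index form satisfies $I_S(Y,Y)<0$. The step I expect to be the main obstacle is turning this negative second variation into genuinely timelike curves. Since $\alpha$ is null, a plain two-parameter variation only gives negative energy to second order, not timelike tangent vectors pointwise. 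The plan is to follow O'Neill's construction. Add a longitudinal term $g(t)\alpha'$ to $Y$, with $g$ chosen as a primitive that absorbs the integrand of $I_S(Y,Y)$, so that the second-order term of $\langle\partial_t\alpha_s,\partial_t\alpha_s\rangle$ becomes a negative constant. Then take the variation $\alpha_s(t)=\exp(sY+\tfrac{s^2}{2}g\alpha')$, adjusted at $t=0$ so the initial points stay in $S$ by using the second fundamental form (an $\exp$ restricted to $S$). For small $s\neq0$ the curves $\alpha_s$ are timelike, join $S$ to $q$, and contradict $q\notin I^+(S)$.

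Combining the three claims proves the stated alternative.
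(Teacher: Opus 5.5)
Your three-case split (not a null pregeodesic; a null geodesic not normal to $S$; a normal null geodesic with a focal point before $q$) is the standard route behind the O'Neill result the paper cites. The first two cases are fine, and so is the index-form construction of $Y$ with $I_S(Y,Y)<0$. In the second case the correction $f\alpha'$ is vacuous: since $\langle\alpha',\alpha'\rangle=0$, you get $\langle V',\alpha'\rangle=-\langle v,\alpha'(0)\rangle$ for every $f$, so just take $f\equiv 0$.

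The genuine gap is in the step you flagged, and it comes from the same fact: a multiple of $\alpha'$ cannot absorb anything. Put $h_s(t)=\langle\partial_t\alpha_s,\partial_t\alpha_s\rangle$ and $A=\nabla_{\partial_s}\partial_s\alpha_s|_{s=0}$. For $Y\perp\alpha'$ the first-order coefficient $2\langle Y',\alpha'\rangle$ vanishes, and
\[
\tfrac12\,\partial_s^2 h_s(t)\big|_{s=0}=\iota(t)+\tfrac{d}{dt}\langle A,\alpha'\rangle(t),
\]
where $\iota=\langle Y',Y'\rangle-\langle R(Y,\alpha')\alpha',Y\rangle$ is the integrand of the index form.

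For your variation $A=g\alpha'$, so $\langle A,\alpha'\rangle=g\langle\alpha',\alpha'\rangle\equiv 0$. Putting $g\alpha'$ into the first-order field instead changes none of $\langle V',\alpha'\rangle$, $\langle V',V'\rangle$ or the curvature term. So the second-order coefficient is just the pointwise integrand $\iota(t)$, which has no sign. Concretely:
\begin{itemize}
\item Where $Y=0$, your curves are $t\mapsto\alpha(t+\tfrac{s^2}{2}g(t))$, i.e.\ reparametrizations of the null geodesic $\alpha$ itself.
\item In Minkowski space, $\iota=\langle Y',Y'\rangle\ge 0$ pointwise because $Y'\in(\alpha')^{\perp}$. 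For a round sphere in a time slice with its ingoing null geodesics, your curves are therefore spacelike to second order wherever $Y'\notin\mathbb{R}\alpha'$.
\end{itemize}
In that example the negativity of $I_S(Y,Y)$ comes entirely from the boundary term $-\langle\mathrm{II}(Y(0),Y(0)),\alpha'(0)\rangle$. Adjusting only at $t=0$ concentrates that term near $t=0$ instead of spreading it along the whole curve.

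The missing idea is a second-order displacement transversal to $(\alpha')^{\perp}$.
\begin{enumerate}
\item Normalize $I_S(Y,Y)=\int_0^1\iota\,dt-\langle\mathrm{II}(Y(0),Y(0)),\alpha'(0)\rangle$, the second variation of $\tfrac12\int_0^1 h_s\,dt$.
\item Let $\phi$ solve $\phi'=I_S(Y,Y)-\iota$ with $\phi(1)=0$. Then automatically $\phi(0)=\langle\mathrm{II}(Y(0),Y(0)),\alpha'(0)\rangle$. This is exactly the value of $\langle A(0),\alpha'(0)\rangle$ forced by $\alpha_s(0)\in S$, and $\phi(1)=0$ matches $A(1)=0$, forced by $\alpha_s(1)=q$.
\item Take any variation $\beta_s$ with $\beta_s(0)$ a curve in $S$ of velocity $Y(0)$, $\beta_s(1)=q$, and variation field $Y$. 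Let $A_\beta$ be its transverse acceleration.
\item Pick a field $N$ along $\alpha$ with $\langle N,\alpha'\rangle=-1$, and set $B:=-(\phi-\langle A_\beta,\alpha'\rangle)N$. By step 2, $B$ vanishes at $t=0$ and $t=1$.
\item Define $\alpha_s(t)=\exp_{\beta_s(t)}(\tfrac{s^2}{2}B_s(t))$, with $B_s$ the parallel transport of $B$ along $s\mapsto\beta_s(t)$.
\end{enumerate}
This variation keeps both endpoint conditions and satisfies $\langle A,\alpha'\rangle=\phi$. It gives $h_s(t)=s^2 I_S(Y,Y)+O(s^3)<0$ uniformly in $t$ for small $s\neq 0$. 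These are the timelike curves from $S$ to $q$ that your third case needs.
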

\begin{proposition}[Compactness of $\overline{\Sigma}_-$]
\label{compactness}
    Let $(M,\,g,\,e^{-V}\mathrm{Vol}_M)$ be an $n+1$-dimensional weighted spacetime satisfying the synthetic $N$-null energy condition, which is past reflecting and null geodesically complete. Suppose that $M$ contains a synthetically asymptotically regular hypersurface $\Sigma$ with a piercing. Then for any enclosing surface $S$ in $\Sigma$, we have that $\overline{\Sigma}_-:= \Sigma \,\cup\, S$ is compact. 
    \begin{proof} 
    Consider the set $E^+(S):=J^+(S)\setminus I^+(S)$: by theorem \ref{pert_timelike}, it is composed by segments of null geodesics orthogonal to $S$ that do not contain focal points. Hence the subset $\mathcal{H}^+\subseteq E^+(S)$ of points contained on future directed null geodesics with $\gamma(0) \in S, \gamma'(0) = K_-(\gamma(0))$ is relatively compact, since it is contained in the image of the compact set $S\times [0, t_0]$ via the map $(x,t)\mapsto \exp_x(t\tilde{K}_-(x)).$ Here we are using that this map is indeed well defined since by null geodesic completeness we have that $t\tilde{K}_-(x)$ belongs to the domain of the exponential map for any $t\in \R.$
    \vskip 5pt
    From now on, the proof is the same as in \cite{ces_ming} and \cite{schinnerl_steinbauer}. Define $T=\partial I^+(\Sigma_+)\setminus \Sigma_+$ and recall that, fixing a vector field $X$ that induces a piercing of $\Sigma$, we obtain a continuous map $\rho_X: M\rightarrow \Sigma$ that sends a point $p\in M$ to the unique intersection point between the flow curve of $X$ passing through $p$ and $\Sigma.$ It is proven in O'Neill \cite{oneill}, Proposition 14.31, that $\rho_X$ is an open map. The proof will be now subdivided in two steps: 
    \[(1)\; \mathrm{Compactness\;of\;} T; \qquad \qquad (2)\; \overline{\Sigma}_-=\rho_X(T). \]
    \textbf{1.} We start by proving that $T$ is achronal. Since achronality passes to subsets, it suffices to show that $\partial I^+(\Sigma_+)$ is achronal. Let $x\in \partial I^+(\Sigma_+)$ and $y\in I^+(x).$ We need to show that $y\notin \partial I^+(\Sigma_+).$ By openness of chronological past, there exists a neighborhood of $x$ that is contained in $I^-(y).$ Since $x$ belongs to the boundary of $I^+(\Sigma_+),$ this neighborhood has nonempty intersection with $I^+(\Sigma_+)$ and therefore it holds $y\in I^+(I^+(\Sigma_+)) = I^+(\Sigma_+).$ By openness of chronological futures this implies $y\notin \partial I^+(\Sigma)$, as we wanted. \vskip 5pt
    Note that by definition $T$ is a closed set and hence to prove compactness it suffices to show that it is contained in the relatively compact set $\mathcal{H}^+.$ Assume by contradiction that there exists a point $q\in T\setminus \mathcal{H}^+$. \vskip 5pt
    \noindent\emph{Claim:} there exists a succession $(q_k) \subseteq I^+(S)$ with $q_k\to q.$ \\
    \emph{Proof of claim:} we can take a a succession of points $q_k\in I^+(q)$ that converges to $q,$ so that $q_k\in I^+(\Sigma_+).$ Denote with $\alpha$ the maximal integral curve of $X$ passing by $q.$ By definition of a piercing, $\alpha$ must intersect $\Sigma$ exactly once. By acausality of $\Sigma$ this intersection point is $q$ or it must lie on the past of $q$: in the first case, since $q\notin \overline{\Sigma}_+$ we get $q\in \Sigma_-;$ in the second, since $q\notin I^+(\overline{\Sigma}_+)$ we get $q\in I^+(\Sigma_-).$ Hence in any case $q\in I^+(\Sigma_-).$ Since $q_k$ lies on the future of $q$, this implies that the past of $q_k$ has nonempty intersection with $\Sigma_-.$ Moreover it also intersects $\Sigma_+$ since $q_k\in I^+(\Sigma_+).$ However the set $I^-(q_k)\,\cap\,\Sigma$ coincides with the image of $I^-(q_k)$ via $\rho_X$ and is therefore is connected, hence it must intersect $S.$ Then $S\,\cap\, I^-(q_k) \ne \emptyset \Rightarrow q_k\in I^+(S),$ which proves the claim. \vskip 5pt
    \indent 
    Once again using Theorem 1 of Nomizu and Ozeki's paper \cite{compl_riem}, we can fix a complete background Riemannian metric $h$ on $M$. Consider a sequence of future directed, future inextendible timelike curves $\sigma_k : [0,+\infty)\rightarrow M$ starting from $S$ that are parametrized by $h$-arclength and satisfy $\sigma_k(t_k) = q_k$ for some $t_k >0.$ By compactness of $S$, up to passing to a subsequence we can suppose that $\sigma_k$ converge $h$-uniformly on compact subsets to a curve $\sigma : [0,+\infty)\rightarrow M.$ Moreover by the Limit Curve theorem (see for example Minguzzi \cite{ming}, Theorem 2.14) we can also suppose that $\sigma$ is also a future directed, future inextendible cuasal curve.
    \vskip 5pt
    \noindent\emph{Claim:} The sequence $(t_k)$ is unbounded. \\
    \emph{Proof of claim:} Suppose by contradiction that $(t_k)$ is bounded. Again up to passing to a subsequence we can suppose $t_k\to t_0$ and therefore $\sigma(t_0) = q.$ By achronality of $T$ the curve $\sigma$ cannot be perturbed to a timelike curve, hence once again by theorem \ref{pert_timelike} it must coincide, up to reparametrization, with a null geodesic $\eta$ exiting orthogonally from $S$ that doesn't contain any focal points before $q = \eta(b)$. The initial velocity $\eta'(0)$ can't be proportional to $K_-(\eta(0)),$ otherwise we would have $q\in \mathcal{H}^+$. Therefore $\eta'(0)$ must be proportional to $K_+(\eta(0)).$ However, this implies that $\eta$ must enter $I^+(\Sigma_+)$ at some time $b'<b$ and therefore $q\in I^+(\Sigma_+),$ which contradicts $q\in T.$ This proves that $t_k$ must be unbounded and up to a subsequence we can assume that $t_k\to +\infty.$ \vskip 5pt
    
    Note now that since $\sigma$ is defined on the whole ray $[0,+\infty)$, it can't be contained on $\partial I^+(S),$ otherwise it would coincide up to reparametrization with a null geodesic exiting orthogonally from $S$ in direction $K_-$ and we proved above that all those curves enter $I^+(S)$ in finite time. Then, there exists $s>0$ such that $\sigma(s) \in I^+(S).$ Now let $U$ be an open neighborhood of $\sigma(s)$ that is contained in $I^+(S)$ and $r$ be a point in $I^-(\sigma(s), U).$ Since $\sigma_k(s) \to \sigma(s)$ and $t_k\to +\infty$, we have $\sigma_k(s) \in U$ and $\sigma_k(s) \in I^-(\sigma_k(t_k))$ for $k$ sufficiently big. Therefore eventually it holds
    $r \ll \sigma_k(s) \le \sigma_k(t_k) = q_k.$ Passing to the limit, we obtain $q\in \overline{I^+(r)}.$
    By past reflectivity of $M$,  we obtain $r\in \overline{I^-(q)}$, from which follows $r\in \overline{I^-(q)}\,\cap\, I^+(S)$, where we have used that $r\in I^-(\sigma(s), U)\subseteq U\subseteq I^+(S).$ This implies that there exists a succession of points in $I^-(q)$ that approach $r\in I^+(S)$. By openness of $I^+(S)$ this implies that $I^-(q)\,\cap\, S\ne \emptyset$ and therefore $q \in I^+(S) \subseteq I^+(\Sigma_+).$ This contradicts $q\in T$ and proves compactness of $T$. 
    \vskip 5pt
    
    \noindent \textbf{2.} We conclude the proof by showing that $\overline{\Sigma}_-=\rho_X(T).$ We prove separately the two opposite inclusions. 
    \vskip 5pt
     
     We start by proving $\rho_X(T)\subseteq \overline{\Sigma}_-.$ 
     Suppose there exists a point $q\in \rho_X(T)\setminus \overline{\Sigma}_- \subseteq \Sigma\setminus \overline{\Sigma}_- = \Sigma_+.$ Then there is a point $q'\in T$ which lies on the flow curve $\eta$ of $X$ passing through $q\in \Sigma_+.$ Since $X$ is future directed, this implies either $q'\in I^+(\Sigma_+)$ or $q'\in I^-(\Sigma_-).$ The first case contradicts $q'\in T$ since $T$ is contained in $\partial I^+(\Sigma_+)$ which is disjoint from $I^+(\Sigma_+);$ the second is also absurd since $q'\in T\subseteq J^+(\Sigma_+)$ and therefore joining a future-directed causal curve from $\Sigma_+$ to $q'$ with $\eta$ we would get a causal curve intersecting $\Sigma_+$ twice, which contradicts acausality of $\Sigma.$\vskip 5pt
    \indent For the opposite inclusion, we start by noting that since $S \subset T,$ we have $S=\rho_X(S) \subseteq \rho_X(T)$. Therefore, we obtain $\overline{\Sigma}_-\setminus \rho_X(T) \subseteq \Sigma_-.$ Assume by contradiction that $\overline{\Sigma}_-\setminus\rho_X(T) \ne \emptyset$. Then there exists a point $p\in \partial_\Sigma \rho_X(T) \,\cap\, \Sigma_-,$ where with $\partial_\Sigma$ we denote the topological boundary of a subset of $\Sigma,$ endowed with the subspace topology. By compactness of $T$, we have $p\in \rho_X(T)$ and therefore there exists a point $p' \in T$ such that $p = \rho_X(p').$ Moreover $p' \notin S$ since otherwise we would have $p  = p' \in S\,\cap\, \Sigma_- = \emptyset.$ 
    
    Now note that the edge of $T$ is contained in $S$. Indeed, if $x\in T\setminus S$, we can choose an open neighborhood $U$ of $x$ in $M$ that doesn't intersect $S$. It is easy to show that 
    \[ I^{+}(x,U) \subseteq I^+(\Sigma), \qquad I^{-}(x, U) \subseteq M \setminus \overline{I^{+}(\Sigma)}\]
    (see Galloway \cite{gall_due}, Lemma 3.2). Therefore any curve in $U$ connecting $I^{-}(x,U)$ and $I^{+}(x,U)$ must intersect $T$, which proves that $x \notin $ edge$(T)$.
    By Proposition \ref{achronal_edge} this implies that $T\setminus S$ is a topological hypersurface and we can choose an open neighborhood $V_0$ of $p'$ in $M$ that  satisfies $\rho_X(V_0) \subseteq \Sigma_-$, $V_0\,\cap\,\Sigma = \emptyset$ and such that $V_0\,\cap\, T$ is open in $T\setminus S$ and homeomorphic to an open subset of $\mathbb{R}^n.$
    
    If we denote by $\Psi$ the flow of $X,$ we can choose an open neighborhood $U_0$ of $p'$ in $M$ and a real number $\epl > 0$ that satisfy $\Psi(U_0\times(-\epl, \epl))\subseteq V_0$. Now the map $\Psi_0=\restr{\Psi}{U_0\,\cap\,T \times (-\epl, \epl)}$ is injective by achronality of $T$. Hence, setting $W := \mathrm{Im} \Psi_0,$ by invariance of domain we have that $\Psi_0$ is a homeomorphism onto $W$ and $W$ is open. Now we can finally derive a contradiction:  we have $p\in \rho_X(U_0\,\cap\, T) = \rho_X(W)$, which is open by openness of $\rho_X$, and therefore $p\in \mathrm{int}(\rho_X(T))$, which contradicts $p\in \partial_\Sigma \rho_X(T).$ 
    \end{proof}
\end{proposition}
\subsection{Topological aspects}
We can finally finish the proof using the same topological argument used by Schinnerl and Steinbauer in \cite{schinnerl_steinbauer}. The only difference is that we need to apply our local to global result proved in theorem \ref{local_to_global} to prove that the synthetic NEC passes to covering spacetimes, which is instead immediate in the classical case, since the Ricci curvature is a local object. We include the full proof of this step to highlight where and why we need the local to global property.
\vskip 5pt 
We start by fixing some notation. Denote with $j$ the inclusion of $S$ in $M$ and let $\Phi: \tilde{M} \rightarrow M$ be a connected smooth covering of $M$ that satisfies $\Phi_\#(\pi_1(\tilde{M})) = j_\#(\pi_1(S)).$ Moreover let $m$ be the inclusion of $\Sigma$ in $M.$ Since the piercing vector field $X$ is nowhere vanishing, we can normalize it by fixing a complete background Riemannian metric. This produces a complete vector field whose flow curves intersect each exactly once $\Sigma.$ His flow maps gives a diffeomorphism $M\rightarrow \Sigma \times \R$ and therefore the induced map $m_\# : \pi_1(\Sigma)\rightarrow \pi_1(M)$ is an isomorphism. 
We now prove that $\tilde{\Sigma}:=\Phi^{-1}(\Sigma)$ is connected. Since the restriction $\restr{\Phi}{\tilde{\Sigma}}$ is still a covering, it suffices to show that any two points in the preimage of a point $x\in\Sigma$ lie in the same connected component of $\tilde{\Sigma}$. If $\tilde{x}_1,\tilde{x}_2\in \Phi^{-1}(x)$ we can join them with a path $\tilde{\gamma}$ in $\tilde{M}$, which projects to a loop $\gamma$ on $x$ in $M$. Since $\pi_1(M)\cong \pi_1(\Sigma)$ the loop $\gamma$ is homotopic to a loop in $\Sigma,$ which in turn lifts to a path in $\tilde{\Sigma}$ connecting $\tilde{x}_1$ and $\tilde{x}_2$. 

\begin{lemma}
    The covering $\restr{\Phi}{\tilde{\Sigma}}$ is trivial.
    \begin{proof}
    We start by noting that since $S$ separates $\Sigma$ in two connected components there is an embedding $F: S\times (-1,1) \rightarrow \Sigma$ with $F(S\times\{0\}) = S$ and $F(S\times (0,1))\subseteq \Sigma_-.$ Let $U_F^- := F(S\times (0,1))$ and $W:=U_F^- \cup S\cup \Sigma_+.$ We have that $W$ retracts by deformation on $\overline{\Sigma}_+$, hence $\pi_1(W) \cong \pi_1(\overline{\Sigma}_+)$. Now consider a connected component $\tilde{W}$ of $\Phi^{-1}(W).$ The restriction $\restr{\Phi}{\tilde{W}} : \tilde{W}\rightarrow W$ is still a covering map; we prove now that it is actually a diffeomorphism. Since it is a surjective local diffeomorphism, it suffices to prove that it is injective.  
    \vskip 5pt
    Let $\tilde{p},\tilde{q}\in \tilde{W}$ such that $\Phi(\tilde{p})=\Phi(\tilde{q}) =:p\in W.$ By connectedness of $\tilde{W}$ there exists a path $\tilde{\alpha} : [0,1] \rightarrow \tilde{W}$ with $\tilde{\alpha}(0)=\tilde{p},$ $\tilde{\alpha}(1) = \tilde{q},$ whose projection $\alpha:=\Phi\circ\tilde{\alpha}$ is a loop in $W$. Since $\pi_1(W)\cong \pi_1(\overline{\Sigma}_+)\cong \pi_1(S)$ (with isomorphisms induced by the inclusions), $\alpha$ is homotopic to a loop in $S.$ However, since $\Phi_\#(\pi_1(\tilde{M})) = j_\#(\pi_1(S))$ we have that $\alpha$ actually lifts to a closed loop on $\tilde{M}.$ By uniqueness of the lift, given a fixed base point, this means that $\tilde{\alpha}$ is a loop, i.e., $\tilde{p}=\tilde{q}.$
    \vskip 5pt
    Suppose now by contradiction that $\restr{\Phi}{\Sigma}$ is a non trivial covering. Then, since $S$ is contained in $W$, the preimage $\Phi^{-1}(S)$ has more than one connected components that all disconnect $\tilde{\Sigma}$ and are diffeomorphic to $S$ via $\Phi.$ Let $\tilde{S}_1$ and $\tilde{S}_2$ be two such components and let $\tilde{W}_1$, $\tilde{W}_2$ be the respective connected components of $\Phi^{-1}(W)$ that contain them. Each $\tilde{W}_i$ contains a copy of $\overline{\Sigma}_+,$ that we denote with $\tilde{C}_i$. For $i=1,2$, we have that $\tilde{C}_i$ is non-compact and closed in $\tilde{W}_i.$ Define $\Sigma^{(1)}_-:=\tilde{\Sigma}\setminus \tilde{C}_1.$ All the hypothesis on $M,\Sigma, S, \Sigma_-$ are also satisfied respectively by $\tilde{M}, \tilde{\Sigma}, \tilde{S}_1$ and $\Sigma^{(1)}_-$; indeed, the synthetic $N$-null energy condition lifts thanks to Corollary \ref{covering}, past reflectivity is assumed for $\tilde{M}$ in the hypothesis of the synthetic Gannon--Lee theorem and all the other assumptions are local. Thus, we can apply Proposition \ref{compactness} to obtain that $\Sigma^{(1)}_-\cup\tilde{S}_1$ is compact. However, since $\tilde{W}_1\,\cap\,\tilde{W}_2=\emptyset,$ we have that $$\tilde{C}_2\subseteq \tilde{W}_2\subseteq \tilde{\Sigma}\setminus \tilde{W}_1\subseteq \tilde{\Sigma}\setminus \tilde{C}_1\subseteq \tilde{S}_1\cup \Sigma^{(1)}_-.$$
    Therefore the compact set $\tilde{S}_1\cup \Sigma^{(1)}_-$ contains the closed non-compact subset $\tilde{C}_2,$ which is a contradiction.
    \end{proof}
\end{lemma}
Now the conclusion of the proof of theorem \ref{synth_gan} follows readily. We have the following commutative diagrams:
\[\begin{tikzcd}
	& {\tilde{\Sigma}} & {\tilde{M}} &&& {\pi_1(\tilde{\Sigma})} & {\pi_1(\tilde{M})} \\
	S & \Sigma & M && {\pi_1(S)} & {\pi_1(\Sigma)} & {\pi_1(M)} \\
	&&&&&& {j_\#(\pi_1(S))}
	\arrow["\tilde{m}", hook, from=1-2, to=1-3]
	\arrow["{\restr{\Phi}{\Sigma}}"', from=1-2, to=2-2]
    \arrow[from=2-2, to=1-2]
	\arrow["\Phi", from=1-3, to=2-3]
	\arrow[from=1-6, to=1-7]
    \arrow[from=2-6, to=1-6]
	\arrow[from=1-6, to=2-6]
	\arrow[from=1-7, to=2-7]
	\arrow["{\Phi_\#}", bend left=50, two heads, from=1-7, to=3-7]
	\arrow["i", hook, from=2-1, to=2-2]
	\arrow["j"', bend right=30, hook, from=2-1, to=2-3]
	\arrow["m", hook, from=2-2, to=2-3]
	\arrow["{i_\#}", from=2-5, to=2-6]
	\arrow["{j_\#}"', from=2-5, to=3-7]
	\arrow["{m_\#}", from=2-6, to=2-7]
    \arrow[from=2-7, to=2-6]
	\arrow[hook, from=3-7, to=2-7]
\end{tikzcd}\]
Let $y\in \pi_1(\Sigma).$ By commutativity we have that 
$$m_\#(y) = \left(\Phi \circ \tilde{m} \circ \restr{\Phi}{\Sigma}^{-1}\right)_\#(y)= \Phi_\# \left(\tilde{m}\circ\restr{\Phi}{\Sigma}^{-1}\right)_\#(y) \in j_\#(\pi_1(S)).$$ 
Therefore, there exists an element $x\in \pi_1(S)$ such that $m_\#(y) = j_\#(x).$ Since $j = m\circ i$, we obtain $$m_\#(y) = m_\#(i_\#(x)).$$ By injectivity of $m_\#$ this implies $y = i_\#(x)$, which proves surjectivity of $i_\#$ since $y$ is arbitrary.

\bibliographystyle{alpha}
\bibliography{biblio}

\end{document}

%% file: preamble.tex
\usepackage{vmargin}
\setmarginsrb{35mm}{25mm}{35 mm}{25mm}{0pt}{0mm}{0pt}{0mm}
\usepackage{amssymb}
\usepackage{amsmath}
\usepackage{amsthm}
\usepackage{pgfplots} 
\pgfplotsset{compat=1.9} 
\usepackage{graphicx}
\usepackage[utf8]{inputenc}
\usepackage{tikz}
\usepackage{tikz-cd}
\usepackage{bbm}
\usepackage{subcaption}
\usepackage[boxruled]{algorithm2e}
\usepackage{mathtools}
\usepackage{lipsum}
\usepackage[title,titletoc]{appendix}
\usepackage{booktabs}
\usepackage{here}
\usepackage{natbib,enumerate} 
\usepackage[]{hyperref}
\usepackage{theoremref}
\usepackage{halloweenmath}
\usepackage{knitting}
\usepackage{esint}

\renewcommand{\phi}{\varphi}

\newtheorem{theorem}{Theorem}[section]
\newtheorem{proposition}[theorem]{Proposition}
\newtheorem{lemma}[theorem]{Lemma}
\newtheorem{corollary}[theorem]{Corollary}
\theoremstyle{remark}
\newtheorem{remark}[theorem]{Remark}
\theoremstyle{definition}
\newtheorem{definition}[theorem]{Definition}
\newcommand{\N}{\mathbb{N}}
\newcommand{\R}{\mathbb{R}}

\newcommand{\epl}{\varepsilon}
\newcommand\restr[2]{{
  \left.\kern-\nulldelimiterspace 
  #1 
  \littletaller 
  \right|_{#2} 
  }}

\newcommand{\littletaller}{\mathchoice{\vphantom{\big|}}{}{}{}}